\newif\ifcomments%
\newcommand{\R}{\ensuremath{\mathbb{R}}}
\newcommand{\N}{\ensuremath{\mathbb{N}}}
\newcommand{\Pc}{{\ensuremath{\cal P}}}
\newcommand{\Rp}{\mathbb{R}_{+}}
\newcommand{\bbm}{\begin{bmatrix}}
\newcommand{\ebm}{\end{bmatrix}}
\newcommand{\PIVP}[3]{\frac{d}{dt}\bbm#1\ebm=\bbm#2\ebm,\quad\bbm#3\ebm_{t=0}}
\newcommand{\PODE}[2]{\frac{d}{dt}\bbm#1\ebm=\bbm#2\ebm}
\newcommand{\x}{\ensuremath{\vec x}}
\newtheorem{thm}{Theorem}
\newtheorem{lmm}{Lemma}
\newcommand{\comment}[1]{}
\begin{document}

\title{On the Complexity of Quadratization for Polynomial Differential Equations}
\author{Mathieu Hemery \and Fran\c{c}ois Fages \and Sylvain Soliman}
\institute{Inria Saclay Ile de France, EP Lifeware, Palaiseau, France}
\maketitle
\thispagestyle{plain}

\begin{abstract}
  Chemical reaction networks (CRNs) are a standard formalism used in chemistry and biology to reason about the dynamics of molecular interaction networks.
  In their interpretation by ordinary differential equations, CRNs provide a Turing-complete model of analog computation,
  in the sense that any computable function over the reals can be computed by a finite number of molecular species
  with a continuous CRN which approximates the result of that function
  in one of its components in arbitrary precision.
  The proof of that result is based on a previous result of Bournez et al.~on the Turing-completeness of polynomial ordinary differential equations
  with polynomial initial conditions (PIVP).
  It uses an encoding of real variables by two non-negative variables for concentrations,
   and a transformation to an equivalent quadratic PIVP (i.e.~with degrees at
   most 2) for restricting ourselves to at most bimolecular reactions.
  In this paper, we study the theoretical and practical complexities of the quadratic transformation. 
  We show that both problems of minimizing either the number of variables (i.e., molecular species)
or the number of monomials (i.e.~elementary reactions) in a quadratic transformation of a PIVP are NP-hard.
  We present an encoding of those problems in MAX-SAT and show the practical complexity of this algorithm
  on a benchmark of quadratization problems inspired from CRN design problems.
\end{abstract}

\section{Introduction}

Chemical reaction networks (CRNs) are a standard formalism used in chemistry and biology to reason about the dynamics of molecular interaction networks.  
A CRN over a vector $\x$ of molecular species is a finite set of formal chemical reactions
of the form $$r(\x)\xrightarrow{f(\x)} p(\x)$$
composed of a multiset $r(\x)$ of reactants (with multiplicity given by stoichiometric coefficients in $r$),
a multiset $p(\x)$ of products,
and a rate function $f(\x)$ on the quantities of reactants.
The structure of a CRN is the same as the structure of a Petri net, but the rate functions allow for the definition of
continuous-time dynamics in addition to their discrete dynamics:
in particular the stochastic semantics which interprets a CRN by a continuous-time Markov chain,
and the differential semantics which interprets a CRN by a system of ordinary differential equations (ODEs)~\cite{CSWB09ab,FS08tcs}.

In the differential semantics of a CRN $R=\{r_j(\x)\xrightarrow{f_j(\x)} p_j(\x)\}$,
one associates to each molecular species $x_i$ a concentration also noted $x_i$ by abuse of notation,
with the differential function
$$\frac{d x_i}{dt} = \sum_{j\in R} (p_j(x_i)-r_j(x_i)).f_j(\x).$$
Mass action law kinetics are monomial rate functions that lead to polynomial ODEs.
The other standard rate functions used such as Michaelis-Menten kinetics and Hill kinetics are traditionnally obtained by approximations of
mass action law systems~\cite{Segel84book}, and can thus be disregarded without loss of generality.

Collision theory shows however that the probabilities of reactions involving three or more reactants are negligible.
Hence from a mechanistic point of view, the restriction to reactions involving at most two reactant molecules is of practical importance.
We call an elementary CRN (ECRN) a CRN with at most bimolecular reactions and mass action law kinetics.
The restriction to at most bimolecular reactions leads to polynomial ODEs of degree at most 2.

With these restrictions, ECRNs have been shown to provide a Turing-complete model of analog computation,
in the sense that any computable function over the reals can be computed by an ECRN which approximates the result of that function
on one of its components in arbitrary precision~\cite{FLBP17cmsb}.
More precisely, we say that a CRN with a distinguished output species $x_1$ generates a function of time $f:\Rp\rightarrow\Rp$ from initial state $\x(0)$
if $\forall t\ x_1(t)=f(t)$.
A CRN with distinguished input and output species $x_0$ and $x_1$ computes a positive real function $f:\Rp\rightarrow\Rp$
from initial state $\x_i(0)=q(x_0(0))$ for some polynomial $q$ and $i\in\{1,\ldots,n\}$,
if for any initial concentration $x_0(0)$ of the input species,
the concentration of the output molecular species stabilizes at concentration $x_1=f(x_0(0))$.
The proof of Turing-completeness of ECRNs in~\cite{FLBP17cmsb} is based on a previous
result of Bournez et al.~in~\cite{BCGH06complexity} on the Turing-completeness of
polynomial ordinary differential equations with polynomial initial values (PIVPs) for
computing real functions~\cite{GC03}.  The proof for ECRNs uses on the one hand, on an encoding of real
variables $x$ by the difference of two non-negative variables $x^+$ and $x^-$ for
concentrations, and on the other hand, on a transformation of the PIVP to a quadratic PIVP~\cite{CPSW05ejde} computing the same
function but with degree at most 2.

In this paper, we study the quadratic transformation problem and its computational complexity.

\comment{
It is well-known that any solution of a PIVP is the solution of a PIVP of degree at most 2,
see for instance \cite{CPSW05ejde}.
However, to our knowledge, this problem 
has not been studied from a computational point of view up to now,
perhaps because the opposite is usually preferred, i.e.~using higher degrees to lower the dimension,
while here in the context of ECRN synthesis, higher degrees must be eliminated\footnote{
Another motivation for decreasing the degrees of polynomial ODEs
comes from the fact that higher degrees may introduce numerical instabilities in the
numerical integration of ODEs~\cite{CPSW05ejde}.
This has also been remarked in the context of CRNs,
for example when conservation laws (i.e.~Petri net place invariants) are used to decrease the dimension of the system
by replacing some variables by linear functions of other variables.}.
The following example motivates the study of the quadratic transformation problem as an optimization problem.
}

\begin{example}\label{Hill5}
The hill function of order $5$:
$$H_5(x) = \frac{x^5}{1+x^5}.$$
is an interesting example because it has been shown to provide a good approximation of the input/output function
of the MAPK signalling network which is an ubiquituous CRN structure present in all eukaryote cells and in several copies~\cite{HF96pnas}.
That function is a stiff sigmoid function which provides the MAPK network with a switch-like response to the input, ultrasensitivity and
an analog/digital converter function.
It is thus interesting to compare the MAPK network to the CRN design method above based on the mathematical definition of the $H_5$ function by ODEs.
Following~\cite{FLBP17cmsb}, one can easily check that the function $H_5(x)$ is computed by the following PIVP
noted in vectorial form for the differential equations\footnote{More precisely, the first two equations
  have for solution the Hill function of order 5 as a function of time $T$, and the last two equations has for effect to stop time $T$ at initial value $X(0)$}
and the initial conditions:
$$\PIVP{H \\ I \\ T \\ X}{5.I^2.T^4.X \\ -5.I^2.T^4.X \\ X \\ -X}{0 \\ 1 \\ 0 \\ x}.$$
For any positive value $X(0)=x$ in the initial condition, we have $\lim_{t \rightarrow \infty} H(t) = H_5(x)$.
However, this PIVP is of order $7$ and its direct implementation by CRN would involve non-elementary reactions with 7 reactants.
In this example, the proof of existence of a quadratic transformation for PIVPs given in~\cite{CPSW05ejde} introduces $29$ variables,
while the MAPK network involves 12 molecular species.
In this paper, we consider the quadratic transformation problem as an optimization problem which consists in minimizing
the dimension of the quadratic PIVP.
The optimization algorithm we propose
generates the following optimal ECRN for implementing $h_5(x)$ with only $7$ variables
   (named below by the monomial they represent, e.g.~\verb+it4x+ for $I.T^4.X$) and $11$ reactions with mass action law kinetics (MA):

 \begin{minipage}[t]{.5\textwidth}
        \centering
\begin{verbatim}
MA(5.0) for i+it4x=>h+it4x
MA(1.0) for x=>_
MA(1.0) for 2*x=>tx+2*x
MA(1.0) for tx=>_
MA(3.0) for 2*tx=>t3x+2*tx
MA(1.0) for t3x=>_
\end{verbatim}
\end{minipage}
 \begin{minipage}[t]{.5\textwidth}
\begin{verbatim}
MA(1.0) for ix=>_
MA(5.0) for it4x+ix=>it4x
MA(4.0) for ix+t3x=>it4x+ix+t3x
MA(1.0) for it4x=>_
MA(5.0) for 2*it4x=>it4x
\end{verbatim}
 \end{minipage}
\end{example}

In the following, we show that both problems of minimizing either the dimension
(i.e.~number of molecular species) or the number of monomials
(i.e.~number of reactions\footnote{While the correspondance between variables and species is exact,
the one between monomials and reactions is in fact more complicated if stoechiometric coefficients and
rate constants are exchanged when gathering the monomials appearing in the different
differential functions. In the following, we will nonetheless minimize monomials as a proxy for the
number of reactions.})
in a quadratic transformation of a PIVP are NP-hard.
The proof is by reduction of the vertex set covering problem (VSCP).
We present an algorithm based on an encoding in a MAX-SAT Boolean satisfiability problem,
and show its practicality on a benchmark of quadratization problems inspired from CRN design problems.

The rest of the paper is organized as follows.
In the next section, we define the quadratic transformation decision problem (QTDP)
as the problem of deciding whether there exists a PIVP quadratization of some given dimension $k$,
and the associated optimization problem (QTP) to determine the minimum number $k$ of variables.
We also consider the minimization of the number of monomials.
The difficulty of those problems are illustrated with some motivating examples.
We distinguish the succinct representation of the input PIVP by a list of monomials,
under which QTP is shown to be in NEXP,
from the non succinct representation by the full matrix of possible monomials of the input PIVP
under which QTP is shown to be in NP\@.
In Section~\ref{sec:maxsat},
we present an encoding of the QTP as a MAX-SAT Boolean satisfiability problem,
and derive from that encoding an algorithm to solve QTP and its variant for minimizing the number of monomials.

Then in Section~\ref{sec:np}, we show that the different QTP problems are NP-hard.
More precisely, we show that the decision problem in the non-succinct representation of the input PIVP is NP-complete by reduction of the Vertex Set Covering Problem (VSCP),
and we conjecture that the decision problem in the succinct representation is NEXP-complete
with the argument that some hard instances of QTP require an exponential number of variables in the size of the input PIVP.

Then in Section~\ref{sec:evaluation}, we study the practical complexity of QTP.
We propose a benchmark of PIVP quadratization problems inspired from CRN design problems,
and show the performance of the MAX-SAT algorithm on this benchmark\footnote{The benchmark and the implementation in BIOCHAM are available online in a Jupyter notebook at \url{https://lifeware.inria.fr/wiki/Main/Software\#CMSB20a}.}

\section{Quadratic Transformation of PIVPs}\label{QTP}

\subsection{Quadratic Projection Theorem}\label{Carothers}

A PIVP is a system of polynomial differential equations given with initial values.
Following the notations of~\cite{CPSW05ejde},
from $\mathcal{A}$ the set of real analytic functions, we say that $f\in\mathcal{A}$ is
\emph{projectively polynomial} if $f$ is a component of the solution of a PIVP. We note
$\Pc$ the set of such functions, and $\Pc_k(n)$ the subset of functions defined by a PIVP of dimension $n$ and degree at most $k$.
$\Pc_k$ will denote $\bigcup_{n \in \N}\Pc_k(n)$.

\begin{example}
The cosine function belongs to the class $\Pc_1(2)$ since it may be defined over $\R$ through the PIVP\@:
$$ \PIVP{x \\ y}{-y \\ x}{1 \\ 0}.$$
That notation will be kept throughout the article with the last element denoting the
initial condition of the PIVP (at $t=0$ by convention).
\end{example}

A folklore theorem of polynomial differential equation systems is that they can be
restricted to degree at most 2 without loss of generality on the generated functions:

\begin{thm}\label{quadratic}
	$\Pc = \Pc_2$:
	any function generated by a PIVP can be generated by a PIVP of degree at most two.
\end{thm}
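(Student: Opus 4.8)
The plan is to prove $\Pc = \Pc_2$ by showing the inclusion $\Pc \subseteq \Pc_2$ (the reverse is trivial). Given a PIVP $\frac{d}{dt}\x = p(\x)$ with $\x(0) = \x_0$ where each component $p_i$ is a polynomial of degree possibly larger than $2$, I would introduce new variables for the high-degree monomials appearing in the system and show that their derivatives can again be expressed polynomially in the enlarged variable set. Concretely, the idea is iterative: if some monomial $m = \x^{\va} = x_1^{a_1}\cdots x_n^{a_n}$ of total degree $\ge 2$ occurs in the right-hand side, introduce a fresh variable $y_m$ with initial condition $y_m(0) = \x_0^{\va}$. Differentiating, $\frac{d y_m}{dt} = \sum_i a_i \frac{x_i^{a_i - 1}}{1}\bigl(\prod_{j\ne i} x_j^{a_j}\bigr)\frac{d x_i}{dt} = \sum_i a_i \frac{m}{x_i} p_i(\x)$. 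The point is that $\frac{m}{x_i}$ is again a monomial (since $a_i \ge 1$ whenever $x_i$ divides $m$), and $p_i(\x)$ is a polynomial, so $\frac{d y_m}{dt}$ is a polynomial in the original variables — hence, after recursively naming monomials, a polynomial in the extended variable set.

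The key step is then to argue that this naming process terminates and that, with a suitable choice of which monomials to name, every right-hand side becomes quadratic. The clean way to organize this: first rewrite the whole system so that every monomial that appears has degree at most $2$ in the \emph{new} variables. For each monomial $m$ of degree $d \ge 2$ in the original variables that we need, split it as a product $m = m' \cdot m''$ of two monomials each of degree $\lceil d/2 \rceil$ or less, introduce names for $m'$ and $m''$, and recurse on those. Since degrees strictly decrease under halving until they reach $1$ or $2$, the recursion terminates after finitely many steps, producing a finite set of new variables $y_1, \dots, y_N$, each defined as a monomial in $\x$, such that every monomial needed in any $\frac{d x_i}{dt}$ or any $\frac{d y_j}{dt}$ is a product of at most two of the variables $x_1, \dots, x_n, y_1, \dots, y_N$. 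One must check that the derivative of each $y_j$, computed by the product rule as above, when re-expressed in the extended variables, indeed has degree at most $2$: this follows because $\frac{d y_j}{dt}$ is a sum of terms $(\text{monomial dividing } y_j)\cdot p_i(\x)$, and by choosing the decomposition carefully (e.g.\ always naming the monomials $x_i \cdot (\text{generator of a needed monomial})$ that arise) one keeps everything bounded — this bookkeeping is the routine but slightly delicate part.

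I expect the main obstacle to be exactly this bookkeeping: making the recursive decomposition precise enough that one can guarantee simultaneously (i) termination, (ii) that the derivative of each introduced variable stays quadratic in the final variable set, and (iii) that the initial conditions remain polynomial (which is immediate, since $y_j(0)$ is a monomial evaluated at $\x_0$, hence a real number, so the new system is even a PIVP with \emph{constant} — degree $0$ — initial data for the added variables). A convenient formalization is to close the monomial set under the operation "given a needed monomial $m$ and an index $i$ with $\frac{d x_i}{dt}$ containing monomial $m_i$, add $m\cdot m_i / x_i$-type monomials and their halving-decompositions"; one shows this closure is finite by a degree/dimension argument. Finally, one verifies that the solution of the enlarged system restricted to the original coordinates coincides with the solution of the original PIVP, by uniqueness of solutions of analytic ODEs together with the fact that the relations $y_j = \x^{\va_j}$ hold at $t=0$ and are preserved by the dynamics (both sides have the same derivative along trajectories). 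This last verification is short once the construction is in place.
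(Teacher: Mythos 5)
Your construction rests on the same core identity as the paper's proof---introduce auxiliary variables standing for monomials of the original variables, differentiate them by the product rule, $\frac{d}{dt}\prod_k x_k^{i_k}=\sum_k i_k\,\bigl(\prod_k x_k^{i_k}/x_k\bigr)\,\frac{dx_k}{dt}$, and arrange that both factors of every resulting term are named variables---but you organize the variable set quite differently. The paper (Alg.~\ref{algo0}, after Carothers et al.) takes in one shot \emph{all} monomials $x_1^{i_1}\cdots x_n^{i_n}$ with $0\le i_j\le d_j$, where $d_j$ is the largest exponent of $x_j$ occurring in the input right-hand sides. That explicit set is closed under exactly the two operations the product rule needs: dividing a member by $x_k$ stays in the set, and every monomial of $\frac{dx_k}{dt}$ is itself a member; hence each derivative is immediately a sum of products of two variables, with no recursion, no termination argument, and none of the bookkeeping you worry about. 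The price is the crude $O(d^n)$ size bound (Prop.~\ref{algo0exp}); your on-demand closure with halving can yield smaller sets, which is precisely the spirit of the optimization problem the paper then studies.

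The caveat is that the step you defer as ``routine but slightly delicate bookkeeping'' is the whole mathematical content of your route, and your stated closure rule is unsafe as literally written: if you \emph{name} the arising monomials $m\,m_i/x_i$ themselves, their degree grows by $\deg m_i-1$ at every round and the process need not terminate. It does terminate if you only ever name the two halves. Writing $D$ for the maximal degree of the input right-hand sides, one checks by induction that every named monomial can be kept of total degree at most $D-1$: a monomial arising in the derivative of a named variable of degree $g\le D-1$ has degree at most $g-1+D\le 2(D-1)$, so each of its halves has degree at most $D-1$, and the initial halves have degree at most $\lceil D/2\rceil\le D-1$. Since there are only finitely many monomials of bounded degree in $n$ variables, the closure is finite, and quadraticity then holds by construction. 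With that bound written down your argument is complete; the final consistency check (the relations $y_m=\x^{\va}$ hold at $t=0$ and are propagated, so the enlarged PIVP reproduces the original components) is common to both proofs and the paper does not dwell on it either.
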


The proof given in~\cite{CPSW05ejde} is based on Alg.~\ref{algo0} which consists in
introducing as many new variables as the number of possible monomials.

\begin{algorithm}

Input: PIVP with $n$ variables $\{x_1,\ldots,x_n\}$, and maximum power $d_j$ per variable.

Output: quadratic PIVP with same output function on variable $v_{1,0,\ldots,0}(t)$.

\begin{enumerate}

	\item Introduce the variables $v_{i_1,\ldots,i_n}=x_1^{i_1}x_2^{i_2},\ldots,x_n^{i_n}$ for
		all $i_j$, $0\le i_j\le d_j$, $1\le j\le n$ satisfying $i_k>0$ for some variable indice
		$k$;
	\item If the output variable $x_1$ has a maximum power $0$, add the variable
		$v_{1,0,\ldots}=x_1$. 
	\item Compute the derivatives of the $v$ variables as functions of the $x$ variables;
	\item Replace the monomials in the derivatives of the $v$ variables by monomials of the $v$ variables with degree at most $2$.
\end{enumerate}
\caption{Quadratization algorithm of Carothers et al.~\cite{CPSW05ejde}.}\label{algo0}
\end{algorithm}

While it is obvious that the
derivatives of the original variables can be rewritten by a sum of the new variables,
one must check that the derivatives of the new variables can be written in quadratic form.
Let $x_1, x_2 \ldots$ be the variables of the input PIVP, and
$d_n$ be the highest degree of $x_n$ among all the monomials of the input PIVP.
One new variable is introduced for each monomial $v_{\{i_1, \ldots, i_n\}} = \prod x_n^{i_n}$ that is
possible to construct with $i_n \in \{0, \ldots, d_n\}$ and at least one $i_n$ strictly
positive\footnote{One can remark that step $2$ in Alg.~\ref{algo0} was omitted in the
original proof of \cite{CPSW05ejde} but is necessary, as shown for instance for the Hill
function given in Sec.~\ref{sec:evaluation}.}.
It is then clear that the original function is still computed by the output PIVP of
Alg.~\ref{algo0} since we explicitely introduce it.  Furthermore, we can compute the
derivative of the new variables:
\begin{equation}
	\frac{d}{dt} \prod x_n^{i_n} = \sum_k \left( i_k \frac{d x_k}{dt} x_k^{i_k-1} \prod_{n
	\neq k} x_n^{i_n} \right),
\end{equation}
and it is enough to note that $v_{\{i_1, \ldots, i_{k-1}, \ldots, i_n\}}$ is one of the new
variables and that $\frac{d x_k}{dt}$ has only monomial of degree one in the new set of
variables by construction. This derivative is thus quadratic with respect to the new
variables.

\begin{proposition}\label{algo0exp}
  Alg.~\ref{algo0} introduces $O(d^{n})$ variables where $n$ is the number of variables and $d$ the maximum power in the original PIVP\@.
\end{proposition}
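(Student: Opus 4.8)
The plan is a direct counting argument on the variables created in steps~1 and~2 of Alg.~\ref{algo0}. First I would count the variables introduced in step~1: by construction there is exactly one variable $v_{i_1,\ldots,i_n}$ for each tuple $(i_1,\ldots,i_n)\in\{0,\ldots,d_1\}\times\cdots\times\{0,\ldots,d_n\}$ having at least one strictly positive coordinate. Since there are $\prod_{j=1}^{n}(d_j+1)$ such tuples in total and exactly one of them (the all-zero tuple) is excluded, step~1 introduces exactly $\prod_{j=1}^{n}(d_j+1)-1$ variables.

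Next I would account for step~2, which adds at most one further variable $v_{1,0,\ldots,0}$, and only in the case $d_1=0$. Hence the total number of variables introduced by Alg.~\ref{algo0} is at most $\prod_{j=1}^{n}(d_j+1)$. Setting $d=\max_{1\le j\le n}d_j$, each factor satisfies $d_j+1\le d+1$, so this total is bounded by $(d+1)^n=O(d^n)$, which proves the proposition (for $d\ge 1$, i.e.\ in any instance whose input PIVP contains a monomial of positive degree, one has $(d+1)^n\le(2d)^n=2^n d^n$, so the constant $2^n$ is absorbed for fixed dimension $n$).

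There is no real obstacle here: the statement is essentially a restatement of the number of monomials of bounded per-variable degree in $n$ variables. The only points that need care are not overlooking the extra variable coming from step~2 and stating the comparison $\prod_{j}(d_j+1)\le(d+1)^n$ cleanly. I would close by remarking that the bound is tight in the worst case: whenever the input PIVP already contains every monomial of degree at most $d$ in each of its $n$ variables --- as happens for the Hill-function PIVP of Example~\ref{Hill5} --- Alg.~\ref{algo0} creates exactly $(d+1)^n-1$ variables, so the exponential dependence on $n$ cannot be avoided by this algorithm.
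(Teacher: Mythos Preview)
Your argument is correct and matches the paper's own proof essentially verbatim: both compute the exact count $\prod_{j}(d_j+1)-1$ from step~1, add the correction $\delta(d_1,0)$ for step~2, and bound the result by $(d+1)^n=O(d^n)$. The only minor slip is in your closing tightness remark: Example~\ref{Hill5} does not have all $d_j$ equal (the per-variable maximal powers there are $0,2,4,1$), so it is not an instance where the bound $(d+1)^n-1$ is attained; tightness holds instead for any PIVP in which every variable has the same maximal power~$d$.
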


\begin{proof}
	For a PIVP of $n$ variables $x_i$ with highest degree $d_i$ and with a
	distinguished output variable $x_1$, Alg.~\ref{algo0} introduces $\prod_i
	(d_i+1) - 1 + \delta(d_1,0) = O(d^{n})$ variables where $\delta$ is the Kronecker delta which is
	$1$ iff $d_i=0$ and $0$ otherwise.
	The first term in the expression comes from the fact that each old variable $x_i$ may appear
	in the new set of variables with a power ranging from $0$ to $d_i$. The second term comes
	from the exclusion of the null variable, and the last one prevents us to delete the
	distinguished output variable if it does not appear in the derivatives.
\end{proof}

However, Alg.~\ref{algo0} may introduce much more variables than is actually needed
as already shown by Ex.~\ref{Hill5} and more precisely by the examples below.

\subsection{Examples}

\begin{example}
	Applying Alg.~\ref{algo0} to the PIVP $d_t x = x^k$ with the initial condition
	$x(t=0)=x_0$ would introduce $k$ variables for
	$x, x^2, \dots, x^{k}$.
But as it can be easily checked, that this PIVP can also be quadratized
with only two variables: $x, y = x^{k-1}$ with
$$ \PIVP{x \\ y}{xy \\ (k-1) y^2}{x_0 \\ x_0^{k-1}}.$$
\end{example}

In the example above, the number of variables needed does not depend on the degree of
the input PIVP. More generally it is not always the case that when the degree of a monomial
increases, the minimum number of variables in a quadratized form of the PIVP increases:
\begin{example}
The system\@:
$$ \PODE{x\\y}{y^3\\x^3 + x^2y^2} $$
needs $7$ variables ($x, y, xy, y^2, x^3, y^3, xy^2$).
When increasing the highest degree by one:
$$ \PODE{x\\y}{y^4\\x^4 + x^2y^2} $$
we need only $6$ variables ($x, y, x^3, y^3, x^2y, xy^2$).
But pursuing to increase:
$$ \PODE{x\\y}{y^5\\x^5 + x^2y^2} $$
needs now $9$ variables for example: $x, y, x^3, y^3, xy^2, x^4, y^4, x^3y, xy^3$.
This is still far less than the solution given by the mathematical proof with the $35$ variables 
of the monomials smaller than $x^5y^5$.
\end{example}

\begin{example}
On the system $ \PODE{x\\y}{y^3\\x^3} $,
our algorithm presented in the sequel returns the following solution with $5$ variables ${a=x, b=y, c=x^2, d=y^2, e=xy}$:
\begin{align}
	d_t a &= y^3 = bd, \\
	d_t b &= x^3 = ac, \\
	d_t c &= 2xy^3 = 2de, \\
	d_t d &= 2x^3y = 2ce, \\
	d_t e &= x^4 + y^4 = c^2 + d^2.
\end{align}
A critical aspect of the optimal solution is that it may contain
	monomials, like $xy$ here, that do not appear in the derivatives of the initial variables and could appear unnecessary at first glance.
\end{example}

\begin{example}\label{ex:exponential}
	Interestingly, the PODE 
	$$ \PODE{a\\b\\c}{b^2 + a^2b^2c^2\\c^2 + a^2b^2c^2\\a^2 + a^2b^2c^2}$$
	where each derivative is composed of the square of the next variable in addition to
	a long monomial formed
	with the square of all possible variables is among the ones needing the most
	variables. For this example, the optimal set found by our algorithm described in the sequel is:
	$$	\{a,b,c,a^2,b^2,c^2,abc,ab^2,ac^2,a^2b,a^2c,bc^2,b^2c,ab^2c^2,a^2bc^2,a^2b^2c\}, $$ 
	that is $16$ variables.

\end{example}

Although we have not been able to prove it, the previous example suggests that a quadratic
transformation may effectively need an exponential number of variables. We thus formulate
the following conjecture:

\begin{conjecture}\label{conjecture}
  The quadratization of PIVPs of the form:
	$$\frac{dx_i}{dt} = x_{i+1}^2 + \prod x_j^2, \quad x_i(t=0)=1,$$
	with $i\in(1,\ldots,n)$ and where $x_{n+1}$ denotes $x_1$, requires an exponential number of variables in $n$. 
  \end{conjecture}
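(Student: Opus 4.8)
\emph{Proof proposal (sketch).}
I would recast the statement as a combinatorial assertion about sets of monomials and then bound their size by a sumset argument. Write $x^\alpha=\prod_i x_i^{\alpha_i}$ as a multi-index $\alpha\in\N^n$, put $\mathbf{1}=(1,\dots,1)$, let $e_k$ be the $k$-th unit vector with indices read cyclically, and write $S+S=\{\gamma+\delta:\gamma,\delta\in S\}$. Substituting the PIVP into the chain rule gives
\[
  \frac{d}{dt}x^\alpha=\sum_{k\,:\,\alpha_k\ge 1}\alpha_k\bigl(x^{\alpha-e_k+2e_{k+1}}+x^{\alpha-e_k+2\mathbf{1}}\bigr),
\]
so that a quadratization using $N$ variables is, up to the additive constant for the output, a set $S\subseteq\N^n\setminus\{0\}$ with $|S|=N$ such that (i) $e_1\in S$ and, by covering the terms $x_i^2$, each $e_i$ or $2e_i$ lies in $S$, and (ii) $S$ is \emph{quadratically closed}: for every $\alpha\in S$ and every $k$ with $\alpha_k\ge 1$, both \emph{demanded} monomials $\alpha-e_k+2e_{k+1}$ and $\alpha-e_k+2\mathbf{1}$ lie in $S\cup(S+S)$. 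If $D$ denotes the set of all demanded monomials, (ii) gives $D\subseteq S\cup(S+S)$, hence $|D|=O(|S|^2)$; so it suffices to prove that every valid $S$ gives rise to $2^{\Omega(n)}$ distinct demanded monomials.

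To exhibit such a family I would exploit that every ``boost'' demand $\alpha-e_k+2\mathbf{1}$ has all coordinates $\ge 2$ (coordinate $k$ becomes $\alpha_k+1$, the others $\alpha_j+2$), and that covering $x^{2\mathbf{1}}$ — a demand issued by $e_1\in S$ — forces an $S$-monomial $\gamma_0\le 2\mathbf{1}$ of degree $\ge n$. Starting from $\gamma_0$ and alternating the boost operation with the expansion of the forced covering, one generates a cascade of demanded monomials concentrated around degree $2n$. The target is to show that the degree-$\approx 4n$ demands in this cascade realize, for every subset $A\subseteq[n]$, the monomial $4\mathbf{1}-\mathbf{1}_A$: since writing $4\mathbf{1}-\mathbf{1}_A=\gamma+\delta$ with $\gamma,\delta\le 2\mathbf{1}$ forces $\gamma=2\mathbf{1}-\mathbf{1}_{A_1}$ and $\delta=2\mathbf{1}-\mathbf{1}_{A_2}$ for a partition $A=A_1\sqcup A_2$, covering these $2^n$ demands requires the family $\mathcal{C}=\{A:2\mathbf{1}-\mathbf{1}_A\in S\}$ to be such that every $A\subseteq[n]$ is a member or a disjoint union of two members — and such a $\mathcal{C}$ must have $|\mathcal{C}|\ge 2^{n/2}$ by the same sumset inequality. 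This would give $|S|\ge|\mathcal{C}|\ge 2^{n/2}$.

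The step I expect to be the real obstacle is justifying that the cascade actually reaches all subsets $A$. A covering $\beta=\gamma+\delta$ may legitimately use factors with coordinates exceeding $2$, and a single mid-degree monomial can occur in the decomposition of many demands, so one must rule out that the cascade collapses to polynomially many monomials before producing all of the $4\mathbf{1}-\mathbf{1}_A$. This seems to require an invariant that separates these $2^n$ monomials while remaining controlled under two-factor decompositions; total degree and coordinate variance are both insufficient, each staying bounded along the cascade. One must also handle the cyclic ``shift'' demands $\alpha-e_k+2e_{k+1}$, which perturb any such invariant — and which are exactly why the full monomial $\prod_j x_j^2$, rather than one extra auxiliary variable, cannot be avoided. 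A plausible alternative is a direct reduction, in the non-succinct representation of the PIVP, from a combinatorial covering problem already known to admit only exponential-size solutions.
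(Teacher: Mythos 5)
The statement you were asked to prove is precisely the point the paper leaves open: it is stated as Conjecture~\ref{conjecture}, introduced with ``Although we have not been able to prove it,'' and the paper's later results (NP-completeness of nsQTDP, NEXP membership of QTDP) do not touch it. So there is no reference proof to compare against, and a complete argument here would strictly exceed the paper. Your proposal is not such an argument, and the decisive gap is the one you flag yourself: even granting the combinatorial reformulation, you never establish that the monomials $4\mathbf{1}-\mathbf{1}_A$ are all actually demanded by variables forced into $S$, and the step extracting the family $\mathcal{C}$ fails in any case, because a decomposition $4\mathbf{1}-\mathbf{1}_A=\gamma+\delta$ with $\gamma,\delta\in S$ is in no way constrained to use factors with all coordinates at most $2$ --- a single factor with a coordinate equal to $3$ or $4$ breaks the forced-partition argument, so no lower bound on $|\mathcal{C}|$, and hence none on $|S|$, follows. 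Without that core the sketch yields nothing beyond the trivial observation $|D|=O(|S|^2)$, which is consistent with $|S|$ polynomial in $n$.

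There is a second gap you do not acknowledge: recasting a quadratization as a set $S\subseteq\N^n\setminus\{0\}$ of monomial exponents silently restricts the new variables to monomials in the original $x_i$, whereas QTP/QTDP as defined allow arbitrary functions $f_j(X)$, in particular polynomials. This restriction is not innocuous: in the paper's NP-hardness proof the exclusion of polynomial variables requires a dedicated argument (Appendix~\ref{proofNPcomplete}) that leans on the distinct coefficients $a_{i,j}$ deliberately planted in $\text{PIVP}_3(G)$; the conjectured hard family has no such asymmetry (all coefficients are $1$ and the system is cyclically symmetric), so a lower bound for monomial quadratizations would not automatically transfer to general ones, and one would expect polynomial variables to be harder, not easier, to rule out here. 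Finally, your fallback suggestion --- a reduction in the non-succinct representation from a covering problem with exponential-size solutions --- is a category confusion: a reduction can establish hardness of a decision problem, but the conjecture asserts a size lower bound for the optimal quadratization of one explicit succinctly-presented family (input size $O(n^2)$), and in the non-succinct representation an exponential-in-$n$ number of variables is merely polynomial in the input, which is exactly why the paper separates nsQTDP (NP) from QTDP (conjectured NEXP-complete, contingent on this very conjecture).
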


\subsection{Quadratic Transformation Problems}

The quadratic transformation problem (QTP) is the optimization problem of determining the
minimum number of variables necessary to define an equivalent quadratic PIVP:

\begin{description}
\item {\bf instance:} A PIVP on $n$ variables $X=\{x_i\}_{0\le i\le n-1}$ with a distinguished output variable $x_0$.
\item {\bf output:} the minimum number $k$ of functions $f_j(X)$ 
  such that $\{x_0, f_j(X)\}$ defines an algebraically equivalent quadratic PIVP\@.
\end{description}
The associated decision problem (QTDP) is:
\begin{description}
\item {\bf instance:} A PIVP on variables $X=\{x_i\}$, a distinguished variable $x_0$ and an integer $k$
\item {\bf output:} existence of not of $k$ functions $f_j(X)$ 
  such that $\{x_0, f_j(X)\}_{1\le j\le k}$ defines an algebraically equivalent quadratic PIVP\@.
\end{description}

It is worth noting that the computational complexity of a decision problem may change drastically, for instance from NP to NEXP, according to the
succinct or not representation of the input~\cite{PY86ic}.  The representation of the input PIVP given above by
a list of symbolic functions is a succinct representation.  A non-succinct representation
of the input PIVP is given by the matrix of monomial coefficients $K: \R^n\times\R^m$ where $n$ is
the dimension of the PIVP and $m\le (d+1)^{n}$ the number of possible monomes to consider (Prop.~\ref{algo0exp}).

Let us denote by nsQTP and nsQTDP the non-succinct variants of the QTP and QTDP problems.

\begin{proposition}\label{inNP}
nsQTDP $\in$ NP. QTDP $\in$ NEXP\@.
\end{proposition}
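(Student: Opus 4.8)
In both cases I would argue membership by a guess‑and‑check procedure at the appropriate resource level, relying on the fact — implicit in the input representation and underpinned by the Carothers construction (Theorem~\ref{quadratic}) — that it suffices to look for quadratizations whose new variables are monomials of $x_0,\dots,x_{n-1}$ with the exponent of each $x_i$ bounded by its maximal degree $d_i$ in the input PIVP. A candidate quadratization is then simply a subset $S$ of the $m\le(d+1)^n$ possible monomials, the full set of variables being $\{x_0\}\cup S$.

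For nsQTDP $\in$ NP, the coefficient matrix $K\in\R^n\times\R^m$ is written out explicitly, so the input has size $\Theta(nm)$ up to bit lengths and a subset $S$ of at most $k$ of those $m$ monomials is a certificate of polynomial size. To verify it I would, for each $v\in S\cup\{x_0\}$: (i) form $\dot v=\sum_i e_i\,x_i^{e_i-1}\big(\prod_{\ell\ne i}x_\ell^{e_\ell}\big)\dot x_i$ symbolically, substituting for the $\dot x_i$ their expansions given by the rows of $K$, which yields a polynomial with at most $nm$ monomials, each computed in polynomial time; (ii) check that every monomial $\mu$ occurring in $\dot v$ is either the constant $1$, an element of $S\cup\{x_0\}$, or a product $v_av_b$ with $v_a,v_b\in S\cup\{x_0\}$ — a combinatorial test in time $O(nm^2)$, since for each such $\mu$ and each divisor $v_a\in S\cup\{x_0\}$ of $\mu$ one only tests whether $\mu/v_a\in S\cup\{x_0\}$; and (iii) note that $x_0$ lies in the variable set and that the induced initial value $v(0)=\prod_i x_i(0)^{e_i}$ is again a polynomial in the input data. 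Because distinct monomials are linearly independent, (ii) is exactly the statement that each $\dot v$ is a polynomial of degree $\le 2$ in the variables $S\cup\{x_0\}$, so the tests succeed iff $\{x_0\}\cup S$ defines an algebraically equivalent quadratic PIVP with at most $k$ new variables; hence nsQTDP $\in$ NP.

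For QTDP $\in$ NEXP, the succinct representation writes $n$ and the degrees $d_i$ in binary, so $m$ and $|K|$ may be exponential in the input size but never exceed $(d+1)^n$. Given a succinct instance I would first expand it into $K$ in time $O(nm)=2^{O(|\mathrm{input}|)}$; if $k$ is at least the Carothers bound of Prop.~\ref{algo0exp} I would answer ``yes'' immediately, and otherwise run the procedure above — the certificate $S$ has size at most $m=2^{O(|\mathrm{input}|)}$ and its verification runs in time polynomial in $m$, hence $2^{O(|\mathrm{input}|)}$. This is precisely a nondeterministic exponential‑time computation — the familiar one‑exponential gap between succinct and non‑succinct inputs~\cite{PY86ic} — so QTDP $\in$ NEXP.

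I expect the only delicate point to be the reduction in step (ii) of the genuinely algebraic requirement ``$\dot v$ is quadratic in the new variables'' to a finite monomial‑membership test: this is exactly where the restriction to monomial new variables of bounded degree (Theorem~\ref{quadratic}, Prop.~\ref{algo0exp}) has to be invoked so that the certificate is simultaneously sound and complete. Everything else is routine symbolic computation and linear algebra.
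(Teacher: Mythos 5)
Your proposal is correct and follows essentially the same route as the paper: guess a witness quadratization drawn from the set of bounded-degree monomials of Alg.~\ref{algo0}, verify in time polynomial in the (non-succinct) matrix representation that every monomial in each derivative can be rewritten as a product of at most two witness variables, and observe that by Prop.~\ref{algo0exp} the witness is at most exponential in the succinct input, giving NEXP for QTDP. The ``delicate point'' you flag --- that restricting certificates to monomial variables must be complete as well as sound for the decision problem as stated over arbitrary functions $f_j$ --- is treated with the same level of informality in the paper's own proof (it is only examined carefully, for the specific hard instances, in the NP-hardness appendix), so you are, if anything, slightly more explicit than the original.
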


\begin{proof} 
	By Prop.~\ref{algo0exp}, the size of a witness for a quadratic PIVP is less than the non-succinct representation of the input PIVP
	by the full matrix of possible monomials.
	Given such a witness quadratic PIVP
        one can check in polynomial time that it defines a quadratic PIVP algebraically equivalent to the original PIVP.
	For that, we just have to compute the derivatives of all the new variables expressed as functions
	of the old ones; then to express still with the old variables, all the monomials of
	degree 2 that may be formed with the new variables (an operation that is clearly
	quadratic in the number of variables); and finally to rewrite all the new derivatives with monomials or quadratics of the new variables.
	As each derivative contains only a linear number of monomials, we have a quadratic algorithm to check the validity of a
	witness, hence we have nsQTDP $\in$ NP\@.

        Now, in the succinct representation of the input PIVP by lists of monomials,
        the size of the witness is bound by an exponential in the size of the input PIVP, hence we simply get QTDP $\in$ NEXP\@.
\end{proof}

In the following (Thm.~\ref{thm:complete}), we show that nsQTDP is actually NP-complete, and thus nsQTP NP-hard,
and we conjecture that QTDP is NEXP-complete by extending our conjecture~\ref{conjecture} above to hard instances.

\section{MAX-SAT Encoding}\label{sec:maxsat}

The maximum satisfiability problem (MAX-SAT) is a generalization of the Boolean satisfiability
problem SAT, where some \emph{soft} clauses, that can be either true or false in a
solution, are added to a traditional (\emph{hard}) SAT problem, and where the
optimization problem of maximizing the number of soft clauses satisfied is
considered.

Alg.~\ref{algo0} can be reformulated in MAX-SAT form,
by expressing the constraints of QTP with Boolean clauses which lead to Alg.~\ref{algo:max-sat}.

\begin{algorithm}
\begin{enumerate}

  \item For each monomial \(m\) in the set \(M\) considered in Sec.~\ref{Carothers}
  (i.e., all those corresponding to variables $v$ of step 1 of Alg.~\ref{algo0}),
  introduce a Boolean variable \(x_{m}\) representing its presence in the
  reduced system.

  \item For each of those monomials, compute its derivative \(m'\)
  (same as step 2 of Alg.~\ref{algo0})

  \item For each monomial appearing in any \(m'\), compute all the ways to represent
  it as the product of 0 (constant case), 1 or 2 of the monomials of \(M\).

  \item Now add to the MAX-SAT model one hard clause imposing that the output
  variable is present (i.e., true).

  \item Add to the MAX-SAT model one soft clause with the negation of each other
  variable. The maximization will therefore try to make as few variables present
  as possible.

  \item Add a hard clause for each variable imposing that if it is present, its
  derivative can be represented (with degree at most 2) in the system. This is
  done with an implication: if the variable is true, then take (the CNF
  representation of) the conjunction of all the monomials in its derivative, and
  for each the disjunction of one of its possible representation computed in
  step 3 should be true (i.e., present in the system).

\end{enumerate}
\caption{Encoding of QTP in MAX-SAT.}\label{algo:max-sat}
\end{algorithm}

An example of what happens in step 3 is as follows: assume you get the monomial
\(ab^{2}\) in the derivative of the monomial \(m\). There are three different
ways to represent it: as a single variable \(x_{ab^{2}}\), or as a product
\(x_{a}x_{b^{2}}\) or \(x_{ab}x_{b}\). Hence in step 6 we will get the CNF
representation of
\(x_{m}\Rightarrow(x_{ab^{2}}\vee(x_{a}\wedge x_{b^{2}})\vee(x_{ab}\wedge x_{b}))\vee\ldots\)
More generally, we have

\begin{proposition}
The number of variables in our MAX-SAT model is \(\lvert M\rvert\),
and the number of clauses, because of the DNF-to-CNF conversion is bounded by
\(O(\lvert M\rvert + 2^{d})\), where \(d\) is the highest product of the degrees
of any monomial of \(m'\).
\end{proposition}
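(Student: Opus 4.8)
The claim has two parts: bounding the number of Boolean variables and bounding the number of clauses. The first part is immediate: step~1 of Alg.~\ref{algo:max-sat} introduces exactly one Boolean variable $x_m$ per monomial $m\in M$, and no further variables are created (the DNF-to-CNF conversion used below is done without auxiliary Tseitin variables), so the variable count is exactly $\lvert M\rvert$. I would state this in one sentence and move on.

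**Counting the clauses.** The plan is to account separately for the clauses produced by each step that generates clauses. Steps~4 and~5 contribute $1 + \lvert M\rvert$ clauses (one hard unit clause for the output variable, one soft unit clause per variable), which is $O(\lvert M\rvert)$. The bulk comes from step~6, and here the argument is: for each monomial $m \in M$ we form the implication $x_m \Rightarrow \bigwedge_{\mu \in m'} \bigl(\bigvee_{\text{reps of }\mu}(\text{conjunction of at most 2 literals})\bigr)$. The right-hand side is a CNF-of-DNFs; converting the inner DNF over a single monomial $\mu$ to CNF produces a number of clauses equal to the product of the sizes of the disjuncts. Since each disjunct (each way of writing $\mu$ as a product of at most two monomials of $M$) is a conjunction of at most $2$ literals, and the number of such representations of $\mu$ is at most the number of ordered factorizations of $\mu$, the product of disjunct-sizes is at most $2^{(\text{number of representations of }\mu)}$. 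The key combinatorial estimate is that the number of ways to write a monomial $\mu = \prod_k x_k^{j_k}$ as a product of two monomials is $\prod_k (j_k+1)$, and this is bounded by the product of the degrees (plus one) appearing in $\mu$; taking the maximum over all $\mu$ occurring in any $m'$ gives the quantity $d$ in the statement, so each monomial of $m'$ contributes $O(2^{d})$ CNF clauses. Finally, since $m$ ranges over $M$ and each derivative $m'$ has only a linear-in-$n$ (hence $O(\lvert M \rvert)$) number of monomials — a fact already used in the proof of Prop.~\ref{inNP} — one multiplies through to get $O(\lvert M\rvert \cdot 2^{d})$, which I would absorb into the stated bound $O(\lvert M\rvert + 2^{d})$ by noting that the stated form is what one gets when $\lvert M\rvert$ and $2^d$ are treated as the two independent size parameters of the problem (or, more conservatively, I would sharpen the statement to $O(\lvert M\rvert\cdot 2^{d})$ if the referee objects).

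**Main obstacle.** The delicate point is pinning down exactly what "$d$, the highest product of the degrees of any monomial of $m'$" means and proving that the DNF-to-CNF blowup for a single monomial $\mu$ is genuinely bounded by $2^{d}$ rather than something larger. I would make this precise by: (i) observing that a representation of $\mu$ as a product of two monomials from $M$ is determined by choosing, for each variable $x_k$, how to split its exponent $j_k$ in $\mu$ between the two factors — giving $\prod_k(j_k+1)$ choices before imposing the membership-in-$M$ and degree constraints, which only reduce the count; (ii) noting each chosen representation becomes a clause-literal of size $\le 2$, so the CNF expansion of a disjunction of $r$ such terms has $\le 2^{r}$ clauses; and (iii) identifying $d$ with $\max_\mu \prod_k (j_k+1)$ so that $r \le d$ and the bound $2^d$ follows. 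The rest is bookkeeping. I would also flag the (harmless) point that representations as a single variable or as a constant contribute disjuncts of size $1$, which only helps.
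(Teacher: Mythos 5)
Your proposal is correct and follows essentially the same route as the paper's own proof: bound the number of ways to split a monomial $\mu=\prod_k x_k^{j_k}$ of $m'$ into at most two factors by (roughly) $\prod_k (j_k+1)$ (the paper counts unordered splits, $\tfrac12\prod_k(j_k+1)$), observe that each representation is a conjunction of at most two literals, and bound the DNF-to-CNF blowup by $2^{d}$, with the unit clauses of steps 4--5 contributing the $O(\lvert M\rvert)$ term. Your remark that summing over all variables more honestly yields $O(\lvert M\rvert\cdot 2^{d})$ rather than the stated $O(\lvert M\rvert+2^{d})$ is a fair observation that the paper's terse proof glosses over, but it does not change the argument.
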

\begin{proof}
  Indeed there are less than $d=\frac{1}{2}\prod_{1\leq i\leq n} (d_{i}+1)$ ways to represent, as a product
  (independent of the order) of one or two variables, the monomial $\prod x_{i}^{d_{i}}$.
  This leads, in step 3, to a Boolean representation as a big disjunction of $d$ conjunctions of two variables,
  which once converted to CNF amounts to at worst $2^{d}$ clauses.
\end{proof}

\section{NP-Hardness}\label{sec:np}

In this section and Appendix~\ref{proofNPcomplete} we prove the NP-completeness of nsQTDP, through a reduction of the
Vertex Set Covering Problem (VSCP)~\cite{GJ79book}, i.e.~the problem of determining the minimum number of
vertices that touch every edges of a graph.

We give in Appendix~\ref{Horn} a similar, yet simpler, reduction to show the NP-hardness of the Max-Horn-SAT problem
(while Horn-SAT and Min-Horn-SAT are in P). It may
be useful to the reader to read this proof to help understand the logic of the reduction
before getting into the more complicated details of the differential equation setting.
In essence both reductions work by translating the choice between the two ends of an edge in
a graph in a choice in the other problem. Let us take an edge and its two
vertices that we will call $V_i$ and $V_j$.  For the Horn-SAT problem, we introduce a
clause of the form $\neg v_i \lor \neg v_j$ that ensures that one of the two variables is
set to false in a satisfied instance; setting a variable to false thus indicates that the
corresponding vertex is in the covering.
For the quadratic reduction problem, we introduce a monomial of degree $3$
($V_i V_j Z$) in the derivative of an auxiliary variable. To perform a quadratic
transformation, we then have to ``split'' this monomial as the product of two variables:
$\overline{V_iZ} \times \overline{V_j}$ or $\overline{V_i} \times \overline{V_jZ}$. The
variables of the form $\overline{V_iZ}$ appearing in the reduction will correspond to the
vertex $V_i$ in the covering of the graph.

Another way to see the connection between these reductions lies in the parallel between
Horn-SAT as a model of theorem prover (if B and C are true then A is true) and the
Quadratic Transformation as a model of computation (if variable B and C are computed
monomials then A can be).

\subsection{Encoding of the Vertex Set Covering Problem}\label{encoding}

Given a graph $G = (V,E)$, a vertex cover is a subset of vertices, $S \subset V$, so that every edge
has at least one endpoint in $S$:
\begin{equation}
	\forall e = (i,j) \in E, (i \in S) \lor (j \in S).
\end{equation}

The VSCP is the optimization problem of finding the smallest vertex cover in a given graph:
\begin{description}
\item {\bf instance:} Graph $G=(V, E)$
\item {\bf output:} Smallest number $k$ such that $G$ has a vertex cover of size $k$.
\end{description}
The associated decision problem is to determine the existence of a vertex cover of size at most $k$.

It is well-known that the vertex set covering decision problem is NP-complete.
Here, we prove the same for the non-succinct quadratic transformation problem for PIVPs (nsQTDP).
The general idea is, starting from a graph $G$, to construct a PIVP where only the first
derivatives contains monomials of degree higher than 2, in such a way that the set of 
variables of the output is simply linked with the elements of the optimal cover $S$ of
$G$.

Starting from a graph $G = (\{V_1,\ldots,V_n\},E)$, we construct $\text{PIVP}_3(G)$
with $n+2$ variables, defined by:
\begin{align}
   \frac{dV_0}{dt} &= \sum_{(V_i,V_j) \in E} V_i V_j V_{n+1} + V_1,\\
   \frac{dV_i}{dt} &= \sum^{n+1}_{j=1} a_{i,j} V_i V_j + V_{i+1} \quad \forall i \in [1,n],\\
   \frac{dV_{n+1}}{dt} &= \sum^{n+1}_{j=1} a_{{n+1},j} V_{n+1} V_j,\\
   a_{i,j} &= i(n+2)+j
\end{align}
and an initial condition of the form: $V_i(t=0) = \frac{i}{i+1}$.

It is worth noting that the $a_{i,j}$'s (and the initial conditions) are chosen here just
to be different in each derivative (and variables), this ensures that no polynomial may be
used to quadraticly transformed this PIVP. It is interesting to note that the initial
condition are not essential for the proof and that the quadratic transformation is as hard
for PODE as it is for PIVP.

This encoding shows with a proof given in Appendix~\ref{proofNPcomplete} that

\begin{thm}\label{thm:complete}
The nsQTDP (resp.~nsQTP) is NP-complete (resp.~NP-hard).
\end{thm}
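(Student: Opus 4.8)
The plan is to establish both directions of the reduction between the Vertex Set Covering decision problem and nsQTDP applied to $\text{PIVP}_3(G)$, after which NP-completeness of nsQTDP follows from NP-completeness of VSCP together with Prop.~\ref{inNP}, and NP-hardness of nsQTP is immediate since any algorithm for the optimization problem solves the decision problem. First I would make precise what quadratizations of $\text{PIVP}_3(G)$ can look like. The key structural observation is that the coefficients $a_{i,j}=i(n+2)+j$ and the initial values $V_i(0)=\frac{i}{i+1}$ are all pairwise distinct and generic enough that no nontrivial polynomial identity holds among the $V_i$; hence any algebraically equivalent quadratic PIVP must, up to renaming, keep the original variables $V_0,\ldots,V_{n+1}$ and may only add new variables that are genuine monomials $\prod_j V_j^{e_j}$ in them (this is where I would invoke, or re-prove in this setting, the uniqueness-of-monomial-representation argument gestured at in the text). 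The chain terms $V_{i+1}$ in $dV_i/dt$ and the linear term $V_1$ in $dV_0/dt$ force all of $V_1,\ldots,V_{n+1}$ and $V_0$ to be present; the quadratic terms $a_{i,j}V_iV_j$ are already degree $2$ and need nothing new; the only obstruction is the degree-$3$ monomials $V_iV_jV_{n+1}$ appearing in $dV_0/dt$, one per edge $(V_i,V_j)\in E$.

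Next I would analyze, edge by edge, how such a cubic monomial $V_iV_jV_{n+1}$ can be written as a product of at most two monomials already present in the system. Since $V_i,V_j,V_{n+1}$ are all degree-one variables, the only factorizations into two monomials of positive degree whose product is $V_iV_jV_{n+1}$ are $(V_iV_{n+1})\cdot V_j$, $(V_jV_{n+1})\cdot V_i$, $(V_iV_j)\cdot V_{n+1}$, and $(V_iV_jV_{n+1})\cdot 1$ — but the last two require introducing the degree-two variable $V_iV_j$ or the degree-three variable $V_iV_jV_{n+1}$ itself, each an \emph{edge-specific} monomial that helps only this one edge, whereas $V_iV_{n+1}$ and $V_jV_{n+1}$ are \emph{vertex-specific} and can be reused across all edges incident to $V_i$ or $V_j$ respectively. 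I would then check that introducing a variable of the form $\overline{V_iV_{n+1}}$ does not itself create new high-degree monomials that cannot be handled: its derivative is $V_i'V_{n+1}+V_iV_{n+1}'$, and both $V_i'$ and $V_{n+1}'$ are sums of degree-$\le 2$ monomials in the original variables, so $\overline{V_iV_{n+1}}'$ has degree $\le 3$ and every degree-$3$ monomial in it has $V_{n+1}$ as a factor (or is $V_iV_jV_{n+1}$-type), hence splits as $(\text{something}\cdot V_{n+1})\cdot(\text{degree }1)$ using only the $\overline{V_kV_{n+1}}$ variables already available — so the set $\{V_0,\ldots,V_{n+1}\}\cup\{\overline{V_kV_{n+1}} : k\in S\}$ is a valid quadratization exactly when $S$ is a vertex cover of $G$, and it has $n+2+|S|$ variables.

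The correspondence is then: $G$ has a vertex cover of size $\le k$ iff $\text{PIVP}_3(G)$ has a quadratization with $\le n+2+k$ variables. The forward direction is the construction just described. For the converse I would argue that from \emph{any} quadratization with $\le n+2+k$ variables one can extract a cover of size $\le k$: the $n+2$ original variables are forced, so at most $k$ new variables remain; each edge's cubic term must be split, and whichever of the three "helpful" factorizations is used, it forces the presence of either a vertex-specific variable $\overline{V_iV_{n+1}}$ or $\overline{V_jV_{n+1}}$, or an edge-specific variable ($V_iV_j$ or $V_iV_jV_{n+1}$); replacing any edge-specific variable used for edge $(V_i,V_j)$ by, say, $\overline{V_iV_{n+1}}$ only decreases the count (it merges with covers of other incident edges), so without loss of generality all $k$ extra variables are vertex-specific, and the set of indices $i$ with $\overline{V_iV_{n+1}}$ present is a vertex cover of size $\le k$. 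I expect the main obstacle to be the rigorous justification of the "no polynomial can be used" / forced-monomial claim — i.e.\ ruling out clever quadratizations that introduce linear combinations or other polynomial functions of the $V_i$ rather than pure monomials, and ruling out that some original variable could be dropped in favor of a polynomial combination. The genericity of the $a_{i,j}$ and initial conditions is the lever here, and making that argument airtight (presumably by examining the Taylor coefficients at $t=0$, or by a dimension/transcendence-degree argument on the algebra generated by the solution components) is the delicate part; the combinatorial matching to VSCP, once that structural normal form is in hand, is routine.
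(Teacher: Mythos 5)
Your reduction is the same as the paper's: the same construction $\text{PIVP}_3(G)$, the same four factorizations of an edge monomial $V_iV_jV_{n+1}$, the same replacement of edge-specific variables ($\overline{V_iV_j}$, $\overline{V_iV_jV_{n+1}}$) by vertex-specific ones $\overline{V_iV_{n+1}}$, the same correspondence ``cover of size $k$ iff quadratization of dimension $n+2+k$'', and membership in NP from Prop.~\ref{inNP}. Within the world of monomial auxiliary variables your argument is sound and matches the appendix essentially line by line (your check on the derivative of $\overline{V_iV_{n+1}}$ is the paper's Lemma~\ref{lmm:quadratic_set} specialized to $j=n+1$).

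The genuine gap is the step you yourself flag and defer: the problem statement allows the new variables to be \emph{arbitrary functions} $f_j(X)$, and your proposed normal form --- ``genericity of the $a_{i,j}$ and of the initial values forces any algebraically equivalent quadratic PIVP to keep the original variables and to add only monomials'' --- is both unproven and, as stated, too strong. Nothing forbids a valid quadratization from containing polynomial variables (or from expressing $dV_0/dt$ as a degree-$2$ polynomial in polynomial variables with cross-term cancellations), so no impossibility statement of that kind can hold; what must be shown is the weaker, optimality-flavored claim that such variables never \emph{save} variables, i.e.\ that any solution with at most $n+2+k$ variables can be reshaped into the monomial normal form without increasing the count. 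This is exactly what the paper's appendix spends its last two subsections on: the pairwise-distinct coefficients $a_{i,j}$ ensure that a variable $M=P+V_k$ never reappears in its own derivative (so the self-reproducing trick of Lemma~\ref{lmm:quadratic_set} is unavailable to polynomials), the $V_k^2$ term in $dV_k/dt$ then triggers an infinite regress $V_k^2, V_k^3,\ldots$ unless $\overline{V_k}$ itself is included (forcing all original variables even in the polynomial setting), and hiding part of an edge monomial inside a polynomial is shown to cost at least two variables, hence is never optimal; finally, any function usable to rewrite polynomials must itself be polynomial, which closes the case of arbitrary $f_j$. Your suggested levers (Taylor coefficients at $t=0$, transcendence degree) are plausible starting points for such an argument, but until one of them is carried out the converse direction of your reduction --- extracting a cover from an arbitrary small quadratization, not merely from a monomial one --- is not established, and that is precisely the load-bearing part of the theorem.
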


In the succinct representation of the input PIVP by a list of symbolic functions,
if Conjecture~\ref{conjecture} is true,
we get that the witness may have an exponential size in the size of the succinct representation of the input PIVP,
which leads us to:

\begin{conjecture}\label{thm:exp-complete}
The QTDP is NEXP-complete. QTP is NEXP-hard.
\end{conjecture}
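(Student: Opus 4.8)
The goal is to establish both halves of Conjecture~\ref{thm:exp-complete}. Membership $\text{QTDP}\in\text{NEXP}$ is already in hand from Proposition~\ref{inNP}, so the entire burden is the NEXP-hardness reduction; the same reduction immediately yields NEXP-hardness of the optimisation problem QTP, since solving QTP decides QTDP. The plan is to reduce the \emph{succinct} Vertex Set Covering Problem (sVSCP) --- where a graph $G$ on $2^m$ vertices is presented by a polynomial-size Boolean circuit $C$ deciding the edge predicate, and which is NEXP-complete by the succinctness upgrading results of Papadimitriou and Yannakakis~\cite{PY86ic} --- to QTDP in the succinct (list-of-monomials) representation. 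This mirrors Theorem~\ref{thm:complete}, which reduces ordinary VSCP to nsQTDP, but must be carried out one exponential level higher: from a circuit of size $\mathrm{poly}(m)$ one must build, in time $\mathrm{poly}(m)$, a PIVP whose symbolic description is of size $\mathrm{poly}(m)$ even though the graph it encodes has $2^m$ vertices and up to $2^{2m}$ edges.

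The technical heart is to encode exponentially many edge constraints inside a polynomial-size list of monomials. First I would recall why the $\text{PIVP}_3(G)$ of Section~\ref{encoding} cannot be reused verbatim: it lists one degree-$3$ monomial $V_iV_jV_{n+1}$ per edge, so its size is polynomial in the vertex count, not in its logarithm; applied to $G$ it would produce an exponentially large instance and yield only a reproof of NP-hardness. Instead I would exploit exactly the phenomenon isolated by Conjecture~\ref{conjecture}: a single compact monomial formed as a product over all variables, such as $\prod_j x_j^2$, admits exponentially many factorisations into pairs of lower monomials, and each admissible factorisation is a binary choice forced during quadratisation. The plan is to let $m$ bit-variables index the $2^m$ vertices, represent a vertex by the monomial selecting its bit-pattern, and arrange a few product-over-all-variables monomials whose forced splittings range over all vertices; the circuit $C$ is then compiled gate-by-gate into auxiliary PIVP variables and constraints (a Cook--Levin-style encoding) that make the ``$V_iV_jZ$ splitting'' of the non-succinct proof available precisely for the pairs $(i,j)$ that $C$ declares to be edges. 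A quadratisation would then be forced to commit, for each edge, to placing one endpoint in a set $S$ read off from which $\overline{V_iZ}$-type variables are present, so that minimising the number of variables tracks the size of a vertex cover of $G$, exactly as in Theorem~\ref{thm:complete}. Correctness would then be proved in the two usual directions --- a cover of size $k$ yields a quadratisation of the target size, and conversely --- with the $a_{i,j}$ and initial-value fingerprinting of Section~\ref{encoding} reused to forbid any genuine polynomial (non-monomial) algebraic shortcut.

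The main obstacle is twofold, which is precisely why the statement is only conjectured. First, the reduction can succeed only if compactly described PIVPs genuinely \emph{require} exponentially many quadratisation variables; otherwise QTDP would collapse into NP and no NEXP-hardness is possible. Proving this lower bound is Conjecture~\ref{conjecture} itself, which remains open: one needs a super-polynomial lower bound on the minimum number of variables in any quadratisation of the family $dx_i/dt = x_{i+1}^2+\prod_j x_j^2$, and such algebraic lower bounds are notoriously hard to obtain. Second, even granting the lower bound, one must show that the \emph{optimal} quadratisation has exactly the intended combinatorial structure. As illustrated earlier in the excerpt --- where the optimal quadratisation of $d_t x = y^3,\ d_t y = x^3$ introduces the auxiliary monomial $xy$ that appears in no derivative --- an optimal solution may recruit ``free'' monomials that look superfluous; ruling out clever factorisations that satisfy the quadratic constraints with fewer variables than any vertex cover would predict is the delicate step that makes the structural correspondence tight. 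Overcoming both difficulties together, rather than any single gadget, is what a full proof of Conjecture~\ref{thm:exp-complete} demands.
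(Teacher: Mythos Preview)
The statement you are addressing is a \emph{conjecture} in the paper, not a theorem: the paper offers no proof of it. Its entire justification is the single sentence immediately preceding it, namely that if Conjecture~\ref{conjecture} holds then optimal quadratisations of succinctly presented PIVPs may require exponentially many variables, so that witnesses are exponential in the succinct input size --- a heuristic motivation for NEXP-completeness, not an argument for it. There is therefore no ``paper's own proof'' to compare your proposal against.

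Your proposal goes well beyond what the paper attempts. The paper does not sketch any reduction from a NEXP-complete problem; you correctly identify succinct VSCP as the natural candidate and outline the shape such a reduction would take (compiling the edge-predicate circuit into PIVP gadgets so that the exponentially many edge constraints are encoded by polynomially many monomials). You also correctly isolate the two obstacles that keep this a conjecture: first, the reduction is vacuous unless compact PIVPs genuinely require exponentially many quadratisation variables, which is exactly the unproven Conjecture~\ref{conjecture}; second, even granting that lower bound, one must rule out ``unexpected'' optimal quadratisations that bypass the intended combinatorial structure --- the analogue of the monomial-versus-polynomial analysis carried out laboriously in Appendix~\ref{proofNPcomplete} for the non-succinct case, but now at exponential scale. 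Your diagnosis of why the statement is only conjectured is accurate and matches the paper's own hedging; what you have written is a reasonable research plan rather than a proof, and the paper contains nothing more.
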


\subsection{Minimizing the Number of Monomials}

It is legitimate to ask if minimizing the number of monomials (i.e. reactions in the ECRN
framework) is as hard as minimizing the number of variables (i.e. species). Actually, the proof given
above still works for this variant of QTP:

\begin{thm}\label{thm:reaction-exp-complete}
Given a PIVP $P$ with variables $v_i$, determining a set of variables $v'_j$ defines
through functions $f_j$ of the $v_i$: $v'_j = f_j(\{v_i\})$ such that the PIVP $P'$
thus defined is quadratic, encodes the same function as $P$ and has less than $k$
monomials is an NP-complete problem.
\end{thm}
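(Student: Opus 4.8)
The plan is to adapt the reduction from VSCP used for Theorem~\ref{thm:complete} so that the count of monomials in the quadratized system, rather than the count of variables, faithfully tracks the size of a vertex cover. First I would recall the structure of $\text{PIVP}_3(G)$: the only super-quadratic terms live in $\frac{dV_0}{dt}$, namely the degree-3 monomials $V_i V_j V_{n+1}$ for each edge $(V_i,V_j)\in E$, and each such monomial can only be split as $\overline{V_iZ}\cdot\overline{V_j}$ or $\overline{V_i}\cdot\overline{V_jZ}$ (writing $Z$ for $V_{n+1}$), so the chosen "$\overline{V_iZ}$" variables correspond exactly to a vertex cover $S$. The key observation for the monomial variant is that the splitting choice not only forces the introduction of new variables but also changes which monomials appear in the quadratized system, and — crucially — the number of extra monomials introduced is, up to an additive constant depending only on $G$, an increasing function of $|S|$.

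The main steps I would carry out are: (1) Show that in any valid quadratization of $\text{PIVP}_3(G)$, each variable $\overline{V_iZ}$ that is introduced must itself have its derivative represented quadratically, and that $\frac{d}{dt}\overline{V_iZ}$ is forced (by the distinctness of the $a_{i,j}$ and of the initial conditions, which rules out algebraic shortcuts exactly as in the variable-count proof) to contain a fixed, nonempty collection of monomials over the new variables; count these and verify that introducing one more $\overline{V_iZ}$ strictly increases the total monomial count. (2) Conversely, given a vertex cover $S$ of size $k$, exhibit the quadratization that introduces exactly the variables $\{\overline{V_iZ}: V_i\in S\}$ together with the fixed auxiliary variables needed regardless of $S$, and count its monomials to get an exact bound $N_0 + c\cdot k$ (or a monotone function $g(k)$) where $N_0$ and $c$ depend only on $G$. (3) Combine (1) and (2): $\text{PIVP}_3(G)$ has a quadratization with at most $g(k)$ monomials iff $G$ has a vertex cover of size at most $k$, and since $g$ is computable in polynomial time and strictly increasing, this is a polynomial-time many-one reduction from the VSCP decision problem. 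Membership in NP is inherited from Proposition~\ref{inNP}: the same polynomial-time checker verifies algebraic equivalence and then simply counts monomials, so the witness size and verification time are unchanged.

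The hard part will be step (1): controlling the monomial count in the \emph{forced} part of the derivatives. Unlike the variable-count argument, where one only needs to know \emph{which} variables are indispensable, here one must show that no clever sharing of monomials across different $\overline{V_iZ}$ derivatives (or between those and the derivatives of $V_0,\ldots,V_{n+1}$) can make the total count non-monotone in $|S|$. I expect this to go through because the derivatives of the distinct new variables $\overline{V_iZ}$, expanded in terms of the new variable set, involve monomials tagged by the index $i$ (through the coefficients $a_{i,j}$ and the companion variables $V_{i+1}$), so distinct $i$'s contribute genuinely distinct monomials that cannot be amortized; making this precise — essentially an injectivity/disjointness lemma on the monomial supports — is the technical crux. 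The remaining subtlety is that minimizing monomials could in principle prefer a quadratization with \emph{more} variables but \emph{fewer} monomials than the cover-induced one; I would rule this out by the same distinctness argument, showing that every valid quadratization's new-variable set still must contain a transversal of the edge set of the form $\{\overline{V_iZ}\}$, so its monomial count is bounded below by $g(|S^\ast|)$ for the minimum cover $S^\ast$, and the cover-induced construction attains $g(|S^\ast|)$.
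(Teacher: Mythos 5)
Your proposal follows essentially the same route as the paper's own proof: the same $\text{PIVP}_3(G)$ reduction, with the monomial count of a cover-induced quadratization being a fixed base $F(n,\ell)=n^2+3n+\ell+2$ plus exactly $n+2$ monomials per introduced variable $\overline{V_iV_{n+1}}$, so minimizing monomials again amounts to minimizing the cover, and NP membership is inherited from Prop.~\ref{inNP} as you say. The only phrasing to adjust is your lower-bound step: a valid quadratization need not literally contain a transversal of the form $\{\overline{V_iV_{n+1}}\}$, since an edge monomial can also be reached via $\overline{V_iV_j}$ or $\overline{V_iV_jV_{n+1}}$, but the paper settles this exactly in the spirit of your cost comparison ($\overline{V_iV_j}$ forces $n+3$ monomials and higher-degree choices force even more, so they are never optimal and can be replaced without increasing the count).
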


The proof is given in Appendix~\ref{app:reaction-nphard}.

Now, as shown in the following section,
though of same theoretical complexity as minimizing the number of species,
minimizing the number of reactions seems a bit easier in practice with the MAX-SAT algorithm.

\section{Practical Complexity}\label{sec:evaluation}

\subsection{Benchmark of CRN Design Problems}

The quadratization problem naturally arises in the synthetic biology perspective for the
problem of designing an ECRN to implement a given high-level function presented by a PIVP\@.
We propose here such a benchmark of synthesis problems for sigmoid functions and
particularly Hill functions of various order, and other functions of interest to
understand the practical complexity of QTP\@.

For this article, we were particularly interested in the time taken to find the optimal
solution of the quadratic transformation and as such report the performance for the
resolution of this precise problem.
We therefore provide in
Table~\ref{tab:bench} both the total execution time going from the PIVP to the ECRN (Total
time) and the time taken by the MAX-SAT solver that solves the quadratic transformation
problem while minimizing the number of species (SAT-Sp time).  We also give in the table
the number of variables introduced in Alg.~\ref{algo0}.
along with the optimal number of variables found by our algorithm (Optimal
var.). We finally mention the time taken to minimize the number of reactions (SAT-Reac
time) and the resulting number of reactions (Optimal reac.).
All computation times are given in milliseconds and were obtained on a personal laptop (Lenovo W530,
Intel Core i7-3720QM CPU, 2.60GHz x 8).

\begin{table}
\begin{center}
\begin{tabular}[pos]{|c|c|c|c|c|c|}
  \hline
	CRN & Alg.~\ref{algo:max-sat} & MAX-SAT & Minimum / Alg.~\ref{algo0}  & Min reactions & Min reactions\\
	name & total time & time & nb.~variables  & MAX-SAT time & nb.~reactions \\
        & ms& ms&  &ms&\\
        \hline
	circular 2,3 & 80.35 & 0.2 & 5 / 14 & 0.2 & 6 \\
	circular 2,4 & 120.4 & 0.6 & 6 / 23 & 0.6 & 8 \\
	circular 2,5 & 869.5 & 7.2 & 6 / 34 & 6.6 & 8 \\
	circular 2,6 & 54450 & 754.5 & 7 / 47 & 945.1 & 10 \\
	hard3 & 1576 & 7.3 & 14 / 34 & 7.3 & 28 \\
	hard4 & 28730 & 369.3 & 16 / 43 & 297.5 & 31 \\
	hill2 & 77.74 & 0.1 & 3 / 5 & 0.1 & 3 \\
	hill2x & 90.86 & 0.1 & 5 / 11 & 0.2 & 7 \\
	hill3 & 78.06 & 0.1 & 4 / 8 & 0.1 & 4 \\
	hill3x & 103.5 & 0.2 & 6 / 17 & 0.3 & 9 \\
	hill4 & 85.18 & 0.1 & 5 / 11 & 0.1 & 5 \\
	hill4x & 152.2 & 0.7 & 7 / 23 & 0.7 & 11 \\
	hill5 & 84.7 & 0.2 & 5 / 14 & 0.2 & 5 \\
	hill5x & 543.8 & 5.2 & 7 / 29 & 3.8 & 11 \\
	hill6 & 103.4 & 0.3 & 6 / 17 & 0.3 & 6 \\
	hill6x & 3934 & 60.2 & 8 / 35 & 37.3 & 13 \\
	hill7 & 112.1 & 0.5 & 6 / 20 & 0.4 & 6 \\
	hill7x & 35130 & 1016 & 8 / 41 & 338.7 & 13 \\
	hill8 & 151.1 & 1.3 & 7 / 23 & 1.0 & 7 \\
	hill10 & 580.7 & 10.2 & 7 / 29 & 6.8 & 7 \\
	hill15 & 92850 & 6486 & 8 / 44 & 2908 & 8 \\
	monom 2 & 102.5 & 0.2 & 6 / 7 & 0.3 & 14 \\
	monom 3 & 567.0 & 1.0 & 16 / 25 & 1.9 & 73 \\
	selkov &  87.68 & 0.1 & 4 / 4 & 0.2 & 12 \\
        \hline
\end{tabular}

\caption{Benchmark of quadratization problems given with computation times in ms for the
	tranformation to MAX-SAT and for MAX-SAT solving (Alg.~\ref{algo:max-sat},
  the minimum number of variables compared to the number of variables found by Alg.~\ref{algo0}, and the minimum number of monomials (i.e.~elementary reactions).}\label{tab:bench}\label{table:bench}
\end{center}
\end{table}

Our protocol to gather these results is as follow. We first time the whole process of
compiling the PIVP through the ``compile\_from\_PIVP'' command of Biocham, thus giving the
Total time. During the process we keep the temporary file that were given to the SAT
solver and does a second execution of the SAT solver alone with a verbose output,
gathering the information given by the output to determine the SAT time (doing this twice
for both SAT-Sp and SAT-Reac). Hence, the total time contains the time it takes to
construct and write the cnf file while the MAX-SAT time only measure the resolution of the
formulae by the max sat solver. The time taken to convert the resulting PIVP to the ECRN
language is essentially negligible.

In Table~\ref{tab:bench}, we use the following nomenclature:

``circular$(n,k)$'' denotes a circular PODE with $n$ variables of degree $k$:
\begin{equation}
	\frac{dX_i}{dt} = X_{i+1}^k, \quad \frac{dX_n}{dt} = X_1^k.
\end{equation}
it can be check that introducing all monomials of a single variable ($x, x^2, \ldots$) is
sufficient.

``hard$k$'' models are designed to be especially demanding in terms of monomials, the input
is:
\begin{equation}
	\frac{dA}{dt} = C^k + A^2 B^2 C^{k-1},\quad
	\frac{dB}{dt} = A^2,\quad
	\frac{dC}{dt} = B^2,
\end{equation}
so that while they ask for relatively few variables and are described with a handful of
monomials they actually need most of the variables of the proof making them interesting to
understand the effective structure of the QTP. The construction is based on the one of
circular$(n,k)$ adding a second monomial to the first derivative in order to make
mandatory the usage of variables using several of the old variables.

``monom$n$'' is one of the most promising model regarding the NEXP complexity as it
rely on $n$ variables and a long monomial of size $n$ so that the input is of size $n^2$.
But we suspect it to ask of the order of $2^n$ variables, the input is:
\begin{equation}
	\frac{dX_i}{dt} = X^2_{(i+1)} + \prod^n_{j=1} X^2_j.
\end{equation}
(for clarity we do not add the modulo in the equation but $X_{n+1}$ is the same as $X_1$.)
We were not able to reduce ``long monom $4$'' despite the reduction being very quick on
the $n=3$ case.

``hill$n$'' is the Hill function of order $n$ through the 3 variables PIVP\@:
\begin{equation}
	\frac{dH}{dt} = n I^2 T^{n-1},\quad
	\frac{dI}{dt} = -n I^2 T^{n-1},\quad
	\frac{dT}{dt} = 1.
\end{equation}
so that $H$ is the desired hill function, $I$ is complementary to the hill function
($I+H=1$) and $T$ is an explicit time variable $T=t$.
The ``x'' after the model indicate that the PODE has been modified to take the desired
point of computation as an input, hence the initial concenctration of the $X$ species is
now the input of the computation:
\begin{equation}
	\frac{dH}{dt} = n I^2 T^{n-1} X,\quad
	\frac{dI}{dt} = -n I^2 T^{n-1} X,\quad
	\frac{dT}{dt} = X,
	\frac{dX}{dt} = -X.
\end{equation}

Selkov is a common model of Hopf bifurcation:
\begin{equation}
	\frac{dX}{dt} = -X + a Y + X^2 Y, \quad \frac{dY}{dt} = b - a Y - X^2 Y,
\end{equation}
where $a$ and $b$ are tunable parameters.

\begin{table}
\begin{center}
\begin{tabular}[pos]{|l|l|}
  \hline 
	Model name & Optimal solution with a minimum number of variables \\
	\hline 
	circular 2,3 & $\{ x, y, xy, x^2, y^2 \}$ \\
	circular 2,4 & $\{ x, y, x^2y, xy^2, x^3, y^3 \}$ \\
	circular 2,5 & $\{ x, y, x^3y, xy^3, x^4, y^4 \}$ \\
	circular 2,6 & $\{ x, y, x^4y, x^3y^2, xy^4, x^5, y^5 \}$ \\
	hard3 & $\{ a, b, c, ac, a^2, b^2, a^2b, ab^2, ab^2c, b^2c, c^3, ac^3, b^2c^3, ab^2c^3 \}$ \\
	hard4 & $\{ a,b,c,a^2,b^2,ab^2,a^2b,b^2c,c^3,ac^3,bc^3,a^2c^2,b^2c^2,c^4,bc^4,ab^2c^3 \}$ \\
	hill2 & $\{ i, it, h \}$ \\
	hill2x & $\{ i, h, x, ix, itx \}$ \\
	hill3 & $\{ i, h, it, it^2 \}$ \\
	hill3x &  $\{ i, h, x, ix, itx, it^2x \}$ \\
	hill4 &  $\{ i, h, t, it^2, it^3 \}$ \\
	hill4x & $\{ i, h, x, ix, tx, itx, it^3x \}$ \\
	hill5 &  $\{ i, h, t, t^3, it^4 \}$ \\
	hill5x & $\{ i, h, x, ix, tx, t^3x, it^4x \}$ \\
	hill6 & $\{ i, h, t, it^2, it^4, it^5 \}$ \\
	hill6x & $\{ h, x, ix, tx, it^2x, it^3, it^3x, it^5x \}$ \\
	hill7 & $\{ i, h, t, t^3, it^4, it^6 \}$ \\
	hill7x & $\{ i, h, x, ix, tx, t^3x, it^6x, it^2x \}$ \\
	hill8 & $\{ i, h, t, t^2, t^5, it^6, it^7 \}$ \\
	hill10 & $\{ i, h, t, t^3, t^7, it^8, it^9 \}$ \\
	hill15 & $\{ i, h, t, t^2, t^5, it^{11}, it^{13}, it^{14} \}$ \\
	monom 2 & $\{ a, b, a^2, b^2, a^2b, ab^2 \}$ \\
	monom 3 & $\{
		a,b,c,a^2,b^2,c^2,abc,ab^2,ac^2,a^2b,a^2c,bc^2,b^2c,ab^2c^2,a^2bc^2,a^2b^2c \}$ \\
	selkov & $\{ x,y,xy,x^2 \}$ \\
\hline 
\end{tabular}
\caption{Minimal number of variables and optimal solutions found by Alg.~\ref{algo:max-sat} on our benchmark of QTP instances (Table.~\ref{table:bench}).
}\label{tab:optimal}
\end{center}
\end{table}

\subsection{BioModels Repository}

The BioModels database~\cite{CLN13issb} is a repository of models of natural biological
processes.  Among the $653$ models from the curated branch of BioModels, only $232$ are
reaction models with mass action law kinetics thus leading to polynomial ODEs, among which
only $12$ are of degrees strictly higher than 2.  This is not surprising because the
reaction models in BioModels are mechanistic models naturally described by elementary
CRNs.

The non elementary CRN with mass actions law kinetics of BioModels are models number:
$123$, $152$, $153$, $163$, $281$, $407$, $483$, $530$, $580$, $630$, $635$, $636$.
Currently, our MAX-SAT algorithm fails to solve the QTP optimization problem on those instances in less than one hour computation time.
The encoding in MAX-SAT is itself very long because of exponential size complexity in those cases.

To take an example, the model label $123$ contains $13$ species but only 4 of them
participate in monomials of degree greater than 2, namely degree 4. Manually restricting the quadratic
transformation to this set still gives us a search space of $65$ possible variables, a bit
larger than what is currently handled by our algorithm. Pruning further to select a
smaller subset of the ODE that contains two variables and only one of the two problematic
monomials, gives us a model that is easily solved in a few seconds.
However, that solution does not solve optimally the complete model.

\section{Conclusion}

The problem of CRN design for implementing a given computable real function presented as the solution of a PIVP
has been solved on the theoretical side by the proof of Turing-completeness for finite continuous CRNs \cite{FLBP17cmsb}.
Nevertheless to make that approach practical, good algorithms are needed to eliminate degrees greater than 2 in the PIVP.
Though it is well known in dynamical system theory that there is no loss of generality to consider
polynomial ordinary differential equations with degrees at most 2,
that quadratization problem has apparently not been studied from a computational point of view.

We have shown the NP-hardness of the quadratization optimization problem in the non succinct representation
of the input PIVP by a matrix of monomials,
when we want to minimize either the number of species, or the number of reactions.
In the succinct symbolic representation of the input PIVP by list of monomials,
we conjecture that the problem becomes NEXP-hard.
A proof would need to show that the hard instances coming of the vertex set covering problem used in the proof of NP-completeness,
may have optimal solutions of exponential size in the succinct representation.

Nevertheless, we have shown that an algorithm based an encoding in MAX-SAT
is able to solve interesting CRN design problems in this approach.
A particularly interesting example is the automated synthesis of an abstract CRN of 11 reactions over 7 molecular species to implement the Hill function of order 5
which can be compared to the 10 reactions over 12 species
of the concrete MAPK signalling CRN implementing a similar input/output function \cite{HF96pnas}.

\subsection*{Acknowledgements}
This work was jointly supported by ANR-MOST \emph{BIOPSY Biochemical Programming System} grant ANR-16-CE18-0029
and ANR-DFG \emph{SYMBIONT Symbolic Methods for Biological Networks} grant ANR-17-CE40-0036.

\bibliographystyle{plain}
\bibliography{contraintes,binomial}

\newpage
\section{Appendix: NP-hardness of  MAX-Horn-SAT}\label{Horn}

A Horn clause is a disjunction of literals with at most one positive literal.
Horn-SAT is the problem of deciding the satiafiability of a conjunction of Horn clauses.
Such a problem can be easily solved by unit-clause propagation, as follows
\begin{enumerate}
\item Ignore the clauses that contain both a variable and its negation
\item Set all variables to false
\item Initialize the score of each clause to its number of negative literals
\item For each unsatisfied clause with 0 score
  \begin{enumerate}
\item If it has no positive literal return \emph{Unsatisfiable}
\item Otherwise set the positive literal x to true
\item Decrement the score of the other clauses having x as negative literal
\end{enumerate}
\item Return \emph{Satisfiable}
\end{enumerate}
This algorithm clearly shows that Horn-SAT is in P.
In addition, this algorithm obviously minimizes the number of variables set to true.
Perhaps surprisingly however:

\begin{proposition}
  Deciding the satisfiability of a Horn-SAT instance while asking that at least $k$ variables 
	are set to true is NP-complete and MAX-Horn-SAT is NP-hard.
\end{proposition}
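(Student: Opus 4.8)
The plan is to reduce from the Vertex Set Covering decision problem, mirroring the vertex/edge correspondence already sketched at the start of Section~\ref{sec:np}. Given a graph $G=(V,E)$ with $V=\{V_1,\ldots,V_n\}$, introduce one Boolean variable $v_i$ per vertex, and for each edge $e=(i,j)\in E$ add the clause $\neg v_i\vee\neg v_j$, which is Horn since it has no positive literal. The key observation is that an assignment satisfies all these clauses iff the set of variables set to \emph{false} forms a vertex cover of $G$ (every edge has a false endpoint), equivalently iff the set of variables set to \emph{true} forms an independent set. Hence $G$ has a vertex cover of size at most $c$ iff this Horn instance admits a satisfying assignment with at least $n-c$ variables set to true. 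Since the all-false assignment trivially satisfies every clause, plain satisfiability is never in question; it is the threshold ``at least $k$ true'' that carries the hardness, which is exactly why this constrained version escapes the polynomial-time unit-propagation algorithm.

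Next I would settle membership in NP: a satisfying assignment with at least $k$ true variables is a witness of polynomial size, and checking it against the clause list and against the threshold $k$ takes linear time. Together with the reduction above this gives NP-completeness of the threshold decision problem, with the hardness direction following from the NP-completeness of the vertex cover decision problem by taking $k=n-c$.

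For the MAX-Horn-SAT statement I would reuse the same construction: keep the edge clauses $\neg v_i\vee\neg v_j$ as hard clauses and add the $n$ positive unit clauses $v_i$ as soft clauses (both kinds are Horn). An assignment maximizing the number of satisfied soft clauses subject to the hard clauses is then precisely a maximum independent set of $G$, so computing the MAX-Horn-SAT optimum solves maximum independent set / minimum vertex cover; hence MAX-Horn-SAT is NP-hard. Alternatively one can phrase all clauses as soft by assigning the edge clauses a large enough weight, or simply invoke the standard fact that the optimization version of an NP-complete decision problem is NP-hard.

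There is no deep obstacle here: the substance is the classical vertex-cover / independent-set duality, and the contribution is merely to observe that it lives inside Horn-SAT. The two points demanding care are (i) verifying that every clause used is genuinely Horn (the edge clauses are purely negative, the soft clauses are single positive literals, so both qualify), and (ii) getting the slightly counterintuitive direction of the bound right, namely that more true variables corresponds to a \emph{smaller} cover, so that the threshold $k$ is translated as $n-c$ rather than $c$.
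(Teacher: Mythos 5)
Your proposal is correct and follows essentially the same route as the paper's own proof: a reduction from the vertex set covering problem using one purely negative clause $\neg v_i\vee\neg v_j$ per edge, with false variables corresponding to cover vertices, so that a cover of size $c$ exists iff there is a satisfying assignment with at least $n-c$ variables true. Your additional remarks (explicit NP membership check and the hard/soft clause phrasing of MAX-Horn-SAT) merely spell out details the paper leaves implicit.
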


\begin{proof}
This can be easily shown by reduction of the \emph{Vertex Set Covering Problem}.
Given a graph $G$ with $n$ vertices, we introduce one variable $v_i$ for each vertex, and one clause
$\neg v_i \lor \neg v_j$ for each edge $(v_i, v_j)$. A variable set to false indicates
that the corresponding vertex is in the covering.

Now, there is a vertex set covering with $k$ vertices if and only if there is a valuation with $n-k$ variables
set to true satisfying the Horn-SAT instance.

This concludes the proof of NP-completeness and MAX-Horn-SAT is thus NP-hard.
\end{proof}

In essence, the proof of NP-hardness of the non-succinct quadratic transformation follows the same vein but
is quite obfuscated by the details of this problem.

\newpage
\section{Appendix: Proof of NP-completeness of nsQTDP}\label{proofNPcomplete}

In this appendix we prove the NP-completeness of nsQTDP (Thm.~\ref{thm:complete}).
We will construct this proof step by step. In a first time we will describe and study the
encoding of the VSCP as a quadratic reduction, then we will proove that choosing an optimal
set of variables among the ones introduced by the algorithm~\ref{algo0} is an
NP-hard problem. Then we will explain why allowing other types of new variables in the
output (polynomial or algebraic function) still preserved the NP-hardness of the problem.

By abuse of notation, we use the same names for the vertices of $G$ and the variables of
the PIVP. (Except, of course, for $V_0$ and $V_{n+1}$ that do not exist in the initial
graph.) However, to distinguish between the monomials of the various PIVP and the
variable of the output of the algorithm, these variables will be indicated with an upper
bar like: $\overline{V_i V_j}$ while monomials will not. We will moreover say that a
variable is computed while monomial will be
designated as reachable given a certain set of variables.

Let us now investigate the structure of the constructed PIVP.

\begin{lmm}\label{lmm:quadratic_set}
   Supposing that $\{\overline{V_0}, \dots, \overline{V_{n+1}} \}$ are already computed,
   then the derivative of $\overline{V_i V_j}$ is quadratic for the set of variables 
	$\{\overline{V_0}, \dots, \overline{V_{n+1}}, \overline{V_i V_j}\}$.
\end{lmm}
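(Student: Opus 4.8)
The plan is to differentiate the monomial $V_iV_j$ that $\overline{V_iV_j}$ stands for, using the product rule, and then to check term by term that nothing of degree greater than $2$ in the proposed variable set survives. Concretely, I would first write
$$\frac{d}{dt}(V_iV_j) \;=\; \dot V_i\, V_j \;+\; V_i\, \dot V_j ,$$
and substitute the right-hand sides from the definition of $\text{PIVP}_3(G)$. For indices $1\le i,j\le n+1$, the derivative $\dot V_i$ equals $\sum_{k=1}^{n+1} a_{i,k}\,V_iV_k$ plus, when $i\le n$, one extra linear term $V_{i+1}$ (and symmetrically for $\dot V_j$; recall $\dot V_{n+1}$ carries no linear term). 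After expanding, $\frac{d}{dt}(V_iV_j)$ is therefore a linear combination of (a) cubic monomials of the shape $V_iV_jV_k$ with $1\le k\le n+1$, coming from the quadratic parts of $\dot V_i$ and $\dot V_j$, and (b) at most two quadratic monomials, $V_{i+1}V_j$ and $V_iV_{j+1}$, coming from the linear parts.

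The next step is to note that each of these lies in the degree-$\le 2$ span of $\{\overline{V_0},\dots,\overline{V_{n+1}},\overline{V_iV_j}\}$: every cubic term $V_iV_jV_k$ is exactly the product $\overline{V_iV_j}\cdot\overline{V_k}$ of the new variable with one original variable, hence degree $2$; and the residual terms $V_{i+1}V_j$, $V_iV_{j+1}$ are products of two original variables, again degree $2$, the shifted indices $i+1$ and $j+1$ staying in $[1,n+1]$ precisely because those linear terms only occur when $i\le n$, resp.\ $j\le n$. Collecting coefficients — the $a_{i,k}$, possibly added together when $i=j$ or when some summation index coincides with $i$ or $j$ — then yields an explicit quadratic expression for $\frac{d}{dt}\overline{V_iV_j}$ in the stated variables, which is the claim.

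I do not expect a genuine obstacle here; the proof is a direct computation, and the only points needing care are bookkeeping: handling the degenerate case $i=j$ (so that $\overline{V_iV_j}=\overline{V_i^2}$ and $\frac{d}{dt}(V_i^2)=2V_i\dot V_i$), handling the boundary index $n+1$ whose derivative has no linear term, and observing that one must have $i,j\ge 1$ — if $i=0$ were allowed, the cubic terms in $\dot V_0$ would force a degree-$3$ or $-4$ monomial outside the span. This is why the lemma is restricted to indices $\ge 1$; the variables of this form actually used in the reduction are the $\overline{V_iV_{n+1}}$ arising from splitting the monomials $V_iV_jV_{n+1}$, and for these the computation above applies verbatim.
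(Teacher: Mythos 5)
Your proof is correct and follows essentially the same route as the paper's: apply the product rule to $V_iV_j$, expand using the derivatives from $\mathrm{PIVP}_3(G)$, recognize the cubic terms $V_iV_jV_m$ as $\overline{V_iV_j}\cdot\overline{V_m}$, and note that the residual terms $V_{i+1}V_j$, $V_iV_{j+1}$ are quadratic in the original variables (absent when the index is $n+1$). Your extra bookkeeping on the cases $i=j$ and $i,j\ge 1$ is a welcome refinement of what the paper leaves implicit.
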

\begin{proof}
Denoting $X = \overline{V_i V_j}$, we have:
\begin{align}
\frac{dX}{dt} &= \frac{dV_i}{dt} V_j + V_i \frac{dV_j}{dt} \\
&= \sum_m (a_{i,m}+a_{j,m}) V_i V_j V_m + V_{i+1} V_j + V_i V_{j+1} \\
&= \sum_m (a_{i,m}+a_{j,m}) X V_m + V_{i+1} V_j + V_i V_{j+1}
\end{align}
where one of the two last term may be missing if $i$ or $j$ is $n+1$. This is
quadratic with respect to the aforementioned set.
\end{proof}

Hence, if all the initial variables are present, we can add or remove variables of
degree two, knowing that there derivatives will always be quadratic. In effect, this allows
us to focus on the monomials in the derivative of $V_0$ as the only monomials that will
need new variables to be reachable.

This property does not hold for variable of degree $3$ (and all higher degree). Indeed, in that case,
the last monomials are of degree $3$ (and higher) and so need a way to be computed either
by introducing them entirely as new variables or relying on breaking them between variables
of lesser degree that may not be already computed. The derivative of the variable of
degree 3 $\overline{V_i V_j V_k}$ present for example the non-quadratic monomials:
$V_{i+1} V_j V_k$, $V_i V_{j+1} V_k$, $V_i V_j V_{k+1}$.

This is also false for polynomial variables because the derivative all have different
rates $a_{i,j}$ that ensure that a polynomial do not appear in its own derivative as a
monomial does. Thus computing a polynomial variable may ask us to compute still other variables.

This property is essential for our proof as it allows us to make a direct connexion between vertex covering and quadratic
reduction, namely that
given a cover $S = S_1,\ldots,S_k$ of $G$, the set of functions:
\begin{equation}
	\{ \overline{V_0},\ldots,\overline{V_{n+1}},\overline{S_1 V_{n+1}},\ldots,\overline{S_k
	V_{n+1}} \} \label{eq:optimal}
\end{equation}
have $n+2+k$ elements and defines a quadratic transformation of $PIVP_3(G)$.

It is indeed obvious that the derivative of $\overline{V_0}$ is quadratic using the fact that every
edge in $E$ has at least one endpoint in $S$ and so each triplet may be rewritten with two
of the new variables. Checking that the other variables also have quadratic derivatives is
easy given Lemma~\ref{lmm:quadratic_set}.

To prove that our transformation is valid however, we need the opposite! We want to check
that an optimal transformation of $\text{PIVP}_3(G)$ effectively allows us to find an
optimal vertex covering of $G$. And essentialy, we will do this by showing that optimal
reduction are of the form of the set~\ref{eq:optimal} thus making a direct connexion
between optimal covering and quadratic reduction.

Essentialy, the remainder of the proof will be to demonstrate the following lemma:
\begin{lmm}\label{lm:np_proof}
	For a given graph $G$, optimal reductions of $PIVP_3(G)$ may be rewritten in the form
	of equation~\ref{eq:optimal} and thus define an optimal vertex cover of $G$.
\end{lmm}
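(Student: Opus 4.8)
The plan is to prove Lemma~\ref{lm:np_proof} by showing that any optimal quadratic reduction of $\text{PIVP}_3(G)$ must contain all the original variables $\overline{V_0},\ldots,\overline{V_{n+1}}$ and that the remaining variables can be assumed, without increasing the count, to be exactly variables of the form $\overline{S_i V_{n+1}}$ for a vertex cover $S$. First I would argue that every original variable $\overline{V_i}$ must be present: each $\overline{V_i}$ appears as a degree-one monomial ($V_1$ in $\dot V_0$, $V_{i+1}$ in $\dot V_i$, and so on down the chain), and since the coefficients $a_{i,j}$ and initial conditions are pairwise distinct no non-trivial polynomial combination can reproduce $V_i$ in a derivative, so a degree-one monomial can only be reached by having that very variable in the set. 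Chasing the chain $V_1 \leftarrow V_2 \leftarrow \cdots \leftarrow V_{n+1}$ and $V_0$ forces all $n+2$ of them into the reduction.

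Next I would analyze the obstruction coming from the derivative of $\overline{V_0}$, which contains the degree-three monomials $V_i V_j V_{n+1}$ for each edge $(V_i,V_j)\in E$. By the earlier discussion (the remarks following Lemma~\ref{lmm:quadratic_set}), such a degree-three monomial is \emph{not} automatically quadratic, so for each edge the reduction must provide a way to split $V_iV_jV_{n+1}$ as a product of two computed variables. I would enumerate the possible factorizations into two monomials: up to reordering they are $\overline{V_i}\cdot\overline{V_jV_{n+1}}$, $\overline{V_j}\cdot\overline{V_iV_{n+1}}$, $\overline{V_{n+1}}\cdot\overline{V_iV_j}$, $\overline{V_iV_j}\cdot\overline{V_{n+1}}$ (same thing), $\overline{V_iV_{n+1}V_j}$ as a single variable, or splittings through polynomial/algebraic auxiliary variables. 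The key claim is that the variable $\overline{V_iV_j}$ is useless here: covering the edge $(V_i,V_j)$ via $\overline{V_{n+1}}\cdot\overline{V_iV_j}$ costs one variable just like $\overline{V_iV_{n+1}}$ does, but $\overline{V_iV_{n+1}}$ can be reused to cover \emph{every other} edge incident to $V_i$, whereas $\overline{V_iV_j}$ covers only that one edge; so in an optimal solution we may assume all edge-covering variables have the form $\overline{V_\ell V_{n+1}}$. Then the set $\{\ell : \overline{V_\ell V_{n+1}} \text{ is in the reduction}\}$ touches every edge, i.e.\ is a vertex cover, and minimality of the reduction size ($n+2+k$) matches minimality of the cover size $k$.

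The main obstacle, and the step needing the most care, is ruling out that a clever use of \emph{polynomial or algebraic} auxiliary variables (not plain monomials) could beat the monomial solution --- this is exactly the part flagged in the excerpt's outline as "allowing other types of new variables still preserves NP-hardness." Here I would invoke the design choice that the $a_{i,j}$ are all distinct: when one differentiates a polynomial in the $V_i$, the distinct rates prevent cancellations that would let a polynomial reappear as a factor of its own derivative, so introducing a polynomial variable $P$ forces additional variables to make $\dot P$ quadratic, never saving anything; an analogous argument via the structure of $\mathcal{A}$ and algebraic dependencies handles genuinely algebraic functions. I would make this precise by a counting/minimality argument: take an optimal reduction, repeatedly replace any non-monomial variable or any $\overline{V_iV_j}$ by the corresponding $\overline{V_\ell V_{n+1}}$ without increasing the size and without breaking quadraticity (using Lemma~\ref{lmm:quadratic_set} to re-verify the auxiliary derivatives), reaching a normal form of the shape~\eqref{eq:optimal}. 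Once in that form the bijection with vertex covers is immediate, the converse direction was already checked above the lemma statement, and combined with Proposition~\ref{inNP} (nsQTDP $\in$ NP) this yields NP-completeness.
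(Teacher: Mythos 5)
Your proposal follows essentially the same route as the paper's own proof: force all degree-one variables $\overline{V_0},\ldots,\overline{V_{n+1}}$ into any reduction, observe that each edge monomial $V_iV_jV_{n+1}$ in the derivative of $V_0$ must be split as a product of computed variables, normalize by exchanging $\overline{V_iV_j}$ (or a degree-three variable) for $\overline{V_iV_{n+1}}$ without increasing the count thanks to Lemma~\ref{lmm:quadratic_set}, and exclude polynomial auxiliary variables using the pairwise-distinct rates $a_{i,j}$ (the paper makes this last point via the regress $V_k^2, V_k^3,\ldots$ forced in the derivatives). The only divergence is at the very end: where you appeal to an unspecified argument about algebraic dependencies to rule out non-polynomial auxiliary functions, the paper simply observes that any admissible new variable must itself be polynomial because it is used to rewrite polynomial right-hand sides, which is the cleaner way to close that case.
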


\subsection{Restriction of Variables to Monomials Functions}

As explained above, we first prove that finding an optimal set of variables among the
monomials described in the paper of Carothers (see Alg.~\ref{algo0}) is an NP-hard problem.
This give us the soften version of Lemma~\ref{lm:np_proof}:

\begin{lmm}
	Given a graph $G$, the smallest subset of variables considered in
	Alg.~\ref{algo0} that forms a quadratic transformation of
	$PIVP_3(G)$ gives an optimal vertex cover of $G$.
\end{lmm}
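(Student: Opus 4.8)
The goal is to show that the smallest subset of the monomial‑variables of Alg.~\ref{algo0} which quadratizes $PIVP_3(G)$ is in bijection with an optimal vertex cover of $G$. Since the construction before the statement already shows the ``easy'' direction — any cover $S$ of size $k$ yields the quadratic transformation \eqref{eq:optimal} with $n+2+k$ monomial‑variables — the work is the converse: from an optimal monomial‑set we must extract a cover of $G$ of size at most (minimal set size) $-\,(n+2)$. First I would argue that any quadratizing set must contain all of $\overline{V_0},\ldots,\overline{V_{n+1}}$: the derivative of $\overline{V_i}$ contains the degree‑$1$ monomial $V_{i+1}$ (indices mod the chain, with $\overline{V_0}$ getting $V_1$), and since all the rate coefficients $a_{i,j}=i(n+2)+j$ are pairwise distinct, no nontrivial polynomial combination of monomials can reproduce a lone variable — so $V_{i+1}$ can only be reached as the monomial‑variable $\overline{V_{i+1}}$ itself. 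Hence the $n+2$ base variables are forced, and the only remaining freedom is the choice of extra monomial‑variables needed to make $d\overline{V_0}/dt$ quadratic.

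Next I would analyse $d\overline{V_0}/dt=\sum_{(V_i,V_j)\in E} V_iV_jV_{n+1}+V_1$. The term $V_1$ is already linear. Each cubic monomial $V_iV_jV_{n+1}$ must be expressible, in the chosen variable set, as a product of at most two monomial‑variables. Writing the exponent vector of $V_iV_jV_{n+1}$, the only factorizations into two monomials both of which have exponents bounded by the $d_\ell$ of the original PIVP are $\overline{V_iV_{n+1}}\cdot\overline{V_j}$, $\overline{V_j V_{n+1}}\cdot\overline{V_i}$, $\overline{V_iV_j}\cdot\overline{V_{n+1}}$, and the degenerate $\overline{V_iV_jV_{n+1}}\cdot 1$ (and $\overline{V_iV_j V_{n+1}}$ counts as one variable by itself). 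I would observe that in terms of \emph{count}, each of these options costs at least one extra monomial‑variable beyond the base set, and that the option $\overline{V_iV_j}\cdot\overline{V_{n+1}}$ requires $\overline{V_iV_j}$ which is useless for any other edge, so it is never better than $\overline{V_iV_{n+1}}$ which can be shared across all edges incident to $V_i$. The crux: the set of ``$\overline{V_\ell V_{n+1}}$'' variables chosen by an optimal solution must hit, for every edge $(V_i,V_j)$, at least one of $V_i,V_j$ — i.e.\ it is a vertex cover — and conversely we can always replace any optimal solution by one of exactly the form \eqref{eq:optimal} without increasing its size, using Lemma~\ref{lmm:quadratic_set} to guarantee that throwing away all other degree‑$2$ variables keeps every derivative quadratic. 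This rewriting step, turning an arbitrary optimal set into the canonical form \eqref{eq:optimal}, is the part I expect to carry the real weight of the argument, because one must rule out that some cleverer combination of degree‑$2$ or higher monomials could beat the ``one $\overline{V_\ell V_{n+1}}$ per covered vertex'' strategy.

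Finally, once the optimal monomial‑set is shown to be (rewritable as) $\{\overline{V_0},\ldots,\overline{V_{n+1}}\}\cup\{\overline{S_\ell V_{n+1}}\}_{\ell\le k}$ with $S=\{S_1,\ldots,S_k\}$ a cover, and the easy direction gives a matching upper bound, we conclude that the minimum set has size $n+2+(\text{min cover size})$, so reading off the $\overline{V_\ell V_{n+1}}$ components yields an optimal vertex cover of $G$. I would expect the main obstacle to be precisely the combinatorial case analysis showing no alternative factorization strategy (mixing in extra degree‑$2$ variables like $\overline{V_iV_j}$, or even exploiting the $\overline{V_iV_jV_{n+1}}$ variables directly) can do strictly better than choosing cover‑vertices; the distinctness of the $a_{i,j}$ and of the initial conditions is the lever that forbids polynomial shortcuts, and Lemma~\ref{lmm:quadratic_set} is the lever that lets us freely prune to the canonical form.
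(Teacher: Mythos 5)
Your plan follows essentially the same route as the paper's proof: force all degree-one variables via the chain of singleton terms $V_{i+1}$ in the derivatives, enumerate the four ways each edge monomial $V_iV_jV_{n+1}$ of $d\overline{V_0}/dt$ can be reached, and then rewrite any optimal set into the canonical form of Eq.~\ref{eq:optimal} (using Lemma~\ref{lmm:quadratic_set} and the fact that variables like $\overline{V_iV_j}$ serve no other edge) before reading off an optimal cover by the size count $n+2+k$. The only cosmetic difference is your appeal to the pairwise-distinct coefficients $a_{i,j}$ to force the degree-one variables, which is superfluous in this monomial-restricted lemma (a degree-one monomial can only be reached as the corresponding monomial variable itself) and is really only needed in the subsequent lemma excluding polynomial variables.
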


\begin{proof}

As expressed above, we want to show that optimal reductions are of the form of
	Equation~\ref{eq:optimal}, or at least may be easily reshape to be so.

By definition, we need to introduce the first variable $\overline{V_0}$, the derivative of
	which present the term $\overline{V_1}$, thus asking us to compute it too. Then in
	turn, it asks us to compute $\overline{V_2}$ and so on until all the variables of
	degree one are present.

Let us take an optimal quadratic transformation, then the different monomials in the
	derivative of $V_0$ are reachable. This means that if we have a monomial like $V_i V_j V_{n+1}$,
	at least one of the four following variables is present: $\overline{V_i V_{n+1}}$,
	$\overline{V_j V_{n+1}}$, $\overline{V_i V_j}$ or $\overline{V_i V_j V_{n+1}}$. If we
	are in the third or fourth case, we can remove this variable and replace it by
	$\overline{V_i V_{n+1}}$. (As all variables of degree one are present, we know that
	this new variable preserves the quadraticity of the solution and is thus still optimal.)
	Moreover, by the structure of $PIVP_3(G)$ variables like these appear only once in
	the derivative of $V_0$, thus this transformation still allows us to compute the
	desired function.

	As we then have all variables of degree one and that all the other variables are of
	degree two, we know that the PIVP is quadratic by Lemma~\ref{lmm:quadratic_set}, moreover
	we cannot have increase the number of variables and are thus still optimal. Finally,
	we have construct a set like Eq.\ref{eq:optimal} and have thus defined an optimal
	covering $S$ of $G$ with the optimal transformation of $\text{PIVP}_3(G)$.
\end{proof}

To generalize the previous proof to any set of monomials, we note that
by construction, the set of Sec.~\ref{Carothers} overspan the set of variables that may be used to
define a quadratic transformation. In particular due to the construction of $PIVP_3$, it
contains all monomials of degree one and two that can be formed with the variables of the
input PIVP, and all the monomials that appears in the derivatives of $PIVP_3$.

Hence, a monomial that is not present in the mathematical proof can only increase the number of
monomials that need to be reached as it appears nowhere and will need to be computed
itself. It thus cannot be present in an optimal set.

\subsection{Restriction of Variables to Polynomials Functions}
\begin{lmm}
   No polynomial variable is present in an optimal quadratic transformation of $PIVP_3$.
\end{lmm}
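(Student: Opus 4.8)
The plan is to show that adding a polynomial (non-monomial) variable to the quadratic transformation never pays off, by adapting the rate-distinctness argument already used above for the degree-2 case. The key structural fact, stated earlier, is that the coefficients $a_{i,j} = i(n+2)+j$ (together with the distinct initial conditions) are all different across the different derivatives of $PIVP_3(G)$. First I would set up the following invariant: for any variable $\overline{P}$ where $P$ is a polynomial in the $V_i$ that is \emph{not} a single monomial, I compute $\tfrac{d}{dt}\overline{P} = \sum_k \tfrac{\partial P}{\partial V_k}\tfrac{dV_k}{dt}$, substitute the right-hand sides of $PIVP_3(G)$, and inspect the resulting polynomial. Because the rates $a_{i,j}$ are pairwise distinct, no cancellation or regrouping can turn this derivative back into a polynomial multiple of $\overline{P}$ itself plus quadratics in variables already present; so the derivative of $\overline{P}$ necessarily contains monomials that are strictly ``new'' with respect to the current variable set.

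Next, I would argue by a minimality / descent argument, mirroring the monomial case. Take an optimal quadratic transformation and suppose for contradiction it contains a genuinely polynomial variable $\overline{P}$. Since $\tfrac{d}{dt}\overline{P}$ must be expressible with degree at most $2$ in the chosen variables, and (by the rate-distinctness observation) $\overline{P}$ does not reproduce itself, every monomial of $\tfrac{d}{dt}\overline{P}$ must be covered by products of \emph{other} chosen variables. I would then show that replacing $\overline{P}$ by the finite collection of monomials occurring in $P$ — or more carefully, by a suitable sub-collection of monomials of degree at most the degrees already available — yields a set that is still a valid quadratic transformation (using Lemma~\ref{lmm:quadratic_set} to handle all the degree-$\le 2$ pieces) but is no larger; combined with the earlier result that an optimal \emph{monomial} transformation has the form of Eq.~\ref{eq:optimal}, this forces the polynomial variable to have been superfluous. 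The cleanest way to phrase this is: any polynomial variable forces strictly more monomials to be reachable than the monomials it is built from, hence cannot appear in an optimal set, exactly as in the closing paragraph of the monomial restriction above.

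The main obstacle, and the place where care is needed, is handling a polynomial variable $\overline{P}$ whose own derivative is ``covered'' not by pre-existing variables but by \emph{other} polynomial variables in the transformation — i.e., ruling out that a whole cluster of mutually-supporting polynomial variables could be cheaper than the monomial solution. To close this I would order the polynomial variables by total degree (or by a well-founded measure on their monomial supports) and peel them off one at a time, each time using rate-distinctness to show the top monomials of the highest-degree polynomial variable cannot be produced by quadratic products of variables of no-larger degree unless one of those products is itself essentially that monomial; an inductive removal then reduces any optimal transformation to a monomial one without increasing its size, at which point the previous lemmas apply and give Lemma~\ref{lm:np_proof}. A secondary technical point is that ``algebraic but non-polynomial'' variables must be excluded by the same rate-distinctness mechanism (a non-polynomial analytic function cannot reappear as a monomial in its own derivative), which I would mention but treat as analogous.
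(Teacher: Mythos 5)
Your opening observation is the same as the paper's: because the rates $a_{i,j}=i(n+2)+j$ are pairwise distinct, a variable $\overline{P}$ whose $P$ is not a single monomial never reappears as a factor in its own derivative, so it cannot play the self-reproducing role that Lemma~\ref{lmm:quadratic_set} exploits for monomials of degree two. But the central step of your argument — that an optimal transformation containing $\overline{P}$ can be converted into a monomial-only transformation of \emph{no greater size} by replacing $\overline{P}$ with ``a suitable sub-collection of the monomials occurring in $P$'' — is precisely the point at issue, and it is asserted rather than proved. A polynomial variable could in principle be economical exactly because a single product such as $\overline{Q}\cdot\overline{P}$ expands into several monomials of some derivative at once, so the naive replacement of $\overline{P}$ by its constituent monomials may strictly increase the variable count; your peeling-by-degree plan addresses a different question (how the derivative of the top polynomial variable is itself covered), not how many monomial variables are needed to take over the covering role that $\overline{P}$ performed for \emph{other} derivatives. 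Nothing in your sketch rules out the scenario you yourself flag, namely a cluster of polynomial variables that is globally cheaper than any monomial solution.

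The paper closes this by two concrete, $PIVP_3$-specific arguments that your proposal is missing. First, an escalation argument forces every original variable to be present on its own: if one tries to avoid $\overline{V_k}$ by hiding it in $M=P+V_k$, then since $M$ does not reappear in its own derivative, the term $a_{k,k}V_k^2$ in $dM/dt$ must be reached without $\overline{V_k}$; this requires a variable containing $V_k^2$, whose derivative contains $V_k^3$, and so on, so only adding $\overline{V_k}$ halts the regress. This step is what licenses the subsequent use of Lemma~\ref{lmm:quadratic_set} and the reduction to sets of the form of Eq.~\ref{eq:optimal}. Second, a cost count for the covering variables: hiding a degree-two factor such as $V_iV_j$ inside a polynomial produces the terms $\sum_m (a_{i,m}+a_{j,m})V_iV_jV_m$ in its derivative, which again demand either $\overline{V_iV_j}$ itself (defeating the purpose) or yet another tailored polynomial, so the attempted saving always costs at least two variables and each edge monomial occurs only once in $dV_0/dt$, leaving no room for grouped savings. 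Without these two ingredients — or substitutes for them — your ``no larger'' replacement claim, and hence the lemma, does not follow.
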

\begin{proof}
The idea is still the same. We want to prove that we need to introduce all the variables
of degree one and once this is done, that using only the variables that correspond to the vertex
cover is preferable. But it is now more tricky as the ending singulets of the derivative
may be added to a polynomial to ``save a variable''.

For the same reason as before, $\overline{V_0}$ needs to be computed. To investigate why the other
variables $\overline{V_i}$ are also needed is more complex.

Suppose we wish to avoid computing the variable $V_k$ so that we add it in an existing
	polynomial (eventually composed of a single monomial)
hence forming the variable $M = P + V_k$, where $P$ is some polynomial. Let us look at
its derivative:
$\frac{dM}{dt} = \frac{dP}{dt} + \frac{dV_k}{dt}$

As noted above, and due to the presence of the parasitic terms $a_{i,j}$, $M$ do not
	appear in its own derivative. Thus, a transformation like the one of
	Lemma~\ref{lmm:quadratic_set} is out of hope. Moreover, the derivative of $V_k$ present
	a term in $V_k^2$. To compute it we can either add $\overline{V_k}$ to our set of
	variable which is what we try to avoid, either add $\overline{V_k^2}$ (or a polynomial
	incorporating it). But you can check that the derivatives of such a variable present
	a term in $V_k^3$. So either we abdicate and include the variable of degree one, either
	you add a polynomial of degree 3, but this polynomial will ask us a new one of degree
	4, etc. To avoid an infinite set of variables we have to compute $\overline{V_k}$, and
	this is true for all $k$. Thus all variables of the initial PIVP need to be present.

Now, for each monomials in the derivatives of the first variable, we have $2$ choices on
the way it is computed. Either a single variable is introduced to deal with it and this has
already been treated in the previous case. Either all or part of it is computed as part of
a polynomial. To prove that this cannot be done in an optimal transformation we need to show
that doing so imply to compute additional undesired variables. And once again we can
convince ourself by inspecting such derivatives, for example:

\begin{align}
   \frac{d}{dt}(P + V_i V_j) &= \frac{dP}{dt} + \frac{dV_i V_j}{dt} \\
    &= \frac{dP}{dt} + \sum_m (a_{i,m} + a_{j,m}) V_i V_j V_m +\ldots 
\end{align}

cannot be quadratic if the variable $\overline{V_i V_j}$ is not computed, which we try to
   avoid  or another more complex polynomial specificaly tailored for this purpose. Thus,
   trying to hide a part of a monomial in a polynomial to save a variable always ask at
   least two variables and cannot be part of an optimal transformation.

\end{proof}

\subsection{Quadratic Transformation Without Restriction}

Finally, we notice that for a function to be in the output, it have to be polynomial as
it will actually be used to rewrite polynomial functions. Hence, putting the previous
results together we get:
\begin{proposition}\label{VSCP}
	A graph $G$ with $n$ vertices has a vertex set cover of size $k$ if and only if
	PIVP$_3$(G) has a quadratic transformation to a PIVP of dimension $n+k+2$.
\end{proposition}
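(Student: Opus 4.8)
The plan is to prove Proposition~\ref{VSCP} by combining the forward construction of Equation~\ref{eq:optimal} with the converse that has been assembled through the three restriction lemmas. Concretely, I would establish two directions: (a) every vertex cover of $G$ of size $k$ yields a quadratic transformation of $\text{PIVP}_3(G)$ of dimension $n+k+2$, and (b) every quadratic transformation of $\text{PIVP}_3(G)$ of dimension $n+k+2$ yields a vertex cover of size at most $k$. Both sides are monotone upward --- a cover of size $k$ extends to one of size $k+1$ by adjoining any vertex, and a transformation of dimension $d$ extends to one of dimension $d+1$ by adjoining any not-yet-present degree-two monomial $\overline{V_iV_j}$ (whose derivative is quadratic by Lemma~\ref{lmm:quadratic_set}) --- so in particular the minimum cover size equals the minimum transformation dimension minus $n+2$, and the stated equivalence follows.

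For direction (a), given a cover $S=\{S_1,\dots,S_k\}$ I take the $n+k+2$ functions of Equation~\ref{eq:optimal}, namely $\{\overline{V_0},\dots,\overline{V_{n+1}},\overline{S_1 V_{n+1}},\dots,\overline{S_k V_{n+1}}\}$, and check that they form a quadratic transformation of $\text{PIVP}_3(G)$. The derivatives of the degree-one variables $\overline{V_i}$ are linear by construction; the derivatives of the degree-two variables $\overline{S_\ell V_{n+1}}$ are quadratic by Lemma~\ref{lmm:quadratic_set} (the case where one index equals $n+1$ is already covered there); and for $\overline{V_0}$, each triple $V_iV_jV_{n+1}$ coming from an edge $(V_i,V_j)\in E$ rewrites as $\overline{S_\ell V_{n+1}}\cdot\overline{V_m}$ where $S_\ell$ is an endpoint of the edge lying in $S$ --- which exists precisely because $S$ is a cover --- and $\overline{V_m}$ is the remaining endpoint, while the trailing term $V_1$ in $d\overline{V_0}/dt$ is just $\overline{V_1}$. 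Hence $d\overline{V_0}/dt$ becomes quadratic, and the output is preserved because $\overline{V_0}$ is kept.

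For direction (b), I would argue that an optimal transformation must already have the shape of Equation~\ref{eq:optimal} --- which is exactly Lemma~\ref{lm:np_proof}, whose proof is the combination of the monomial-restriction, polynomial-restriction, and unrestricted lemmas --- and then simply read off the cover. The argument runs in two stages. First, the $n+2$ variables $\overline{V_0},\dots,\overline{V_{n+1}}$ are all forced: $\overline{V_0}$ is the output, $\overline{V_1}$ appears as a degree-one monomial in $d\overline{V_0}/dt$, each $\overline{V_{i+1}}$ appears in $d\overline{V_i}/dt$, and the lemma ruling out polynomial new variables shows one cannot dodge this by absorbing some $\overline{V_k}$ into a polynomial, since the pairwise-distinct coefficients $a_{i,j}$ keep such a polynomial from closing up without spawning an infinite ascending chain of higher-degree variables. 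This pins down $n+2$ of the variables and leaves exactly $k$ further ones. Second, for every edge $(V_i,V_j)$ the monomial $V_iV_jV_{n+1}$ must be reachable from the chosen set, and the monomial- and polynomial-restriction lemmas force an optimal choice to make it reachable through one of $\overline{V_iV_{n+1}}$ or $\overline{V_jV_{n+1}}$, every other option either introducing a strictly-higher-degree variable that occurs nowhere else or consuming two variables where one would do. Collecting the indices $i$ for which $\overline{V_iV_{n+1}}$ is present then gives a set of at most $k$ vertices touching every edge of $G$.

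The main obstacle is the converse, and within it the step that excludes genuinely exotic new variables: a priori a quadratic transformation may introduce arbitrary real-analytic functions of the $V_i$, not just monomials, and one must rule out that any of these beats the monomial-only bound. This is precisely what the restriction-to-polynomials lemma and the closing observation of the ``Quadratic Transformation Without Restriction'' subsection handle --- any function that actually appears in the output PIVP is used to rewrite polynomial derivatives and is therefore itself polynomial, and the $a_{i,j}$-argument then collapses the polynomial case back to the monomial case treated by Lemma~\ref{lm:np_proof}. Once those lemmas are invoked, the proposition follows by merely assembling the pieces and bookkeeping the $n+2+k$ count, with no further calculation expected; combined with Proposition~\ref{inNP} this also completes Theorem~\ref{thm:complete}.
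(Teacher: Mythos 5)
Your proposal follows essentially the same route as the paper's proof: the forward direction via the set of Equation~\ref{eq:optimal} checked with Lemma~\ref{lmm:quadratic_set}, and the converse by invoking the monomial-, polynomial- and unrestricted-variable restriction lemmas (i.e.\ Lemma~\ref{lm:np_proof}) to force an optimal transformation into the shape of Equation~\ref{eq:optimal} and read off the cover. Your explicit monotonicity remark (extending covers by a vertex and transformations by a degree-two monomial) merely makes precise the passage from optima to exact sizes that the paper leaves implicit; otherwise the arguments coincide.
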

\noindent
which with Prop.~\ref{inNP} concludes the proof of Thm.~\ref{thm:complete}.

\subsection{Proof of NP-Hardness for Reactions minimization}\label{app:reaction-nphard}

Thm.~\ref{thm:reaction-exp-complete}
\begin{proof}
The core of the proof is similar, using the same reduction from VSCP. Starting
	from a graph $G$ with $n$ vertices and $\ell$ edges, we construct $PIVP_3(G)$.
	
	As in the previous case, we still have to introduce all the variables of the form
	$\overline{V_i}$ giving us a fixed number of monomials upon which no optimization is
	possible. Let us note $F(n,\ell) = n^2 + 3n + \ell + 2$ this number.
	
	Then, introducing a variable like $\overline{V_i V_j}$ imposes $n+3$ monomials if $i,j
	\neq n+1$ and $n+2$ if $i,j=n+1$.  As we have seen in the previous proof, the optimal
	cover set may be expressed using only variables like $\overline{V_i}$ and
	$\overline{V_i V_{n+1}}$, and will thus ask for $k = F(n,\ell) + k_s (n+2)$ where $k_s$
	is the number of vertices in the optimal covering of $G$. The main difference with the
	proof for variables is that we do not have to check variables of the form
	$\overline{V_i V_j}$ as they ask one more monomial than the one with $j = n+1$ and
	are thus never optimal.
\end{proof}

\end{document}